\begin{document}

\title{Functional Decomposition Using Principal Subfields}

\author{Luiz E. Allem}
\affiliation{%
  \institution{Univ. Federal do Rio Grande do Sul}
  \streetaddress{Av. Bento Gon\c calves, 9500}
  \city{Porto Alegre} 
  \state{RS} 
  \postcode{91509-900}
}
\email{emilio.allem@ufrgs.br}

\author{Juliane G. Capaverde}
\affiliation{%
  \institution{Univ. Federal do Rio Grande do Sul}
  \streetaddress{Av. Bento Gon\c calves, 9500}
  \city{Porto Alegre} 
  \state{RS} 
  \postcode{91509-900}
}
\email{juliane.capaverde@ufrgs.br}

\author{Mark van Hoeij}
\affiliation{%
  \institution{Florida State University}
  \streetaddress{211 Love Building}
  \city{Tallahassee} 
  \state{FL} 
  \postcode{32306}
}
\email{hoeij@math.fsu.edu}

\author{Jonas Szutkoski}
\affiliation{%
  \institution{Univ. Federal do Rio Grande do Sul}
  \streetaddress{Av. Bento Gon\c calves, 9500}
  \city{Porto Alegre} 
  \state{RS} 
  \postcode{91509-900}
}
\email{jonas.szutkoski@ufrgs.br}

\begin{abstract}
Let $f\in K(t)$ be a univariate rational function. It is well known that any non-trivial decomposition $g\circ h$, with $g,h\in K(t)$, corresponds to a non-trivial subfield $K(f(t))\subsetneq L\subsetneq K(t)$ and vice-versa. In this paper we use the idea of \textit{principal subfields} and fast subfield-intersection techniques to compute the subfield lattice of $K(t)/K(f(t))$. This yields a Las Vegas type algorithm with improved complexity and better run times for finding \textit{all} non-equivalent complete decompositions of $f$.
\end{abstract}
%
%
%

\keywords{Rational Function Decomposition; Subfield Lattice; Partitions;}

\maketitle

\renewcommand{\shortauthors}{L. E. Allem et al.}

\begin{acks}
This work is part of the doctoral studies of Jonas Szutkoski, who acknowledges the support of \grantsponsor{Capes}{CAPES}{http://www.capes.gov.br/}. Mark van Hoeij was supported by the \grantsponsor{NSF}{National Science Foundation}{https://www.nsf.gov/} under Grant No.:\grantnum{NSF}{1618657}.
\end{acks}

\section{Introduction}

The problem of finding a decomposition of a rational function $f\in K(t)$ has been studied by several authors. We highlight the work of \cite{Z1991}, who gave the first polynomial time algorithm that finds (if it exists) a single decomposition of $f$. In \cite{ALONSO1995}, an exponential time algorithm was given that computes all decompositions of $f$ by generalizing the ideas of \cite{Barton1985} for the polynomial case. More recently, \cite{Ayad2008} have presented improvements on the work of \cite{ALONSO1995}, though the complexity is still exponential on the degree of $f$.

The particular case of polynomial decomposition has long been studied. As far as the authors' knowledge goes, the first work on polynomial decomposition is from \cite{Ritt1992}, which presented a strong structural property of polynomial decompositions over complex numbers. In \cite{Barton1985}, two (exponential time) algorithms are presented for finding the decompositions of a polynomial over a field of characteristic zero. Some simplifications are suggested in \cite{Alagar1985, ALONSO1995}. In \cite{Kozen1989}, the first polynomial time algorithm is given, which works over any commutative ring containing an inverse of $\deg (g)$. Further improvements are presented in \cite{Gathen1990t, Gathen1990w}. More recently, \cite{Blankertz2014} presented a polynomial time algorithm that finds all \textit{minimal decompositions} of $f$, with no restrictions on $\deg (g)$ or the characteristic of the field.

\textit{Univariate Functional Decomposition} (either rational function or polynomial) is closely related to the subfield lattice of the field extension $K(t)/K(f(t))$ (see Theorem \ref{theo:main} below). However, in general, the number of subfields is not polynomially bounded and algorithms for finding \textit{all} complete decompositions can suffer a combinatorial explosion. In this work, we try to improve the non-polynomial part of the complexity. In order to achieve this, we make use of the so-called \textit{principal subfields}, as defined in \cite{vanHoeij2013}. 

Let $f(t)=p(t)/q(t)\in K(t)$, $n=\max\{\deg(p),\deg(q)\}$ and $\nabla_f:=p(x)q(t)-p(t)q(x)\in K[x,t]$. Assuming we are given the factorization of $\nabla_f$, using fast arithmetic and fast subfield intersection techniques (see \cite{Szutkoski2016}), we can compute the subfield lattice of $K(t)/K(f(t))$ with an expected number of \[\tilde{\mathcal{O}}(rn^2) \text{ field operations plus }\tilde{\mathcal{O}}(mr^2) \text{ CPU operations,}\]  where $m$ is the number of subfields of $K(t)/K(f(t))$ and $r\leq n$ is the number of irreducible factors of $\nabla_f$ (see Corollary \ref{cor:complex}). 
This approach has the following improvements:
\begin{itemize}
\item Better complexity: our algorithm does not depend exponentially on $r$ as previous methods (e.g., \cite{Ayad2008}), only on the number $m$ (usually $m\ll 2^r$). Furthermore, the non-polynomial part of the complexity is reduced to CPU operations.
\item Better run times: an implementation in Magma shows the efficiency of our algorithm when compared to \cite{Ayad2008}.
\item Better complexity for polynomial decomposition (especially in the wild case): given $f(t)\in \mathbb{F}_q[t]$, we can find all \textit{minimal decompositions} of $f$ with an expected number of $\tilde{\mathcal{O}}(rn^2)$ operations in $\mathbb{F}_q$ plus the cost of factoring $\nabla_f=f(x)-f(t)\in \mathbb{F}_q[x,t]$, where $r$ is the number of irreducible factors of $\nabla_f$. See Remark \ref{rem:poly}.
\end{itemize}

As previous methods, our algorithm requires the factorization of a bivariate polynomial over $K$ of total degree at most $2n$, where $n$ is the degree of $f$. 

\subsection{Roadmap}
In Section \ref{sec:2}, we recall some basic definitions and results about rational function decomposition. Let $K$ be a field and let $f\in K(t)$ be a rational function. In Section \ref{sec:3}, we give a description of the principal subfields of the extension $K(t)/K(f)$. Every subfield of a finite separable field extension corresponds to a unique partition on the set of irreducible factors of the minimal polynomial of this extension. In Section \ref{sec:4}, we show how one can compute this partition for every principal subfield. This allows us to compute the subfield lattice of $K(t)/K(f(t))$ efficiently. Finally, in Section \ref{sec:5}, we show how one can use these partitions to compute all decompositions of $f$. Some timings comparing our algorithm with \cite{Ayad2008} are also given.

\subsection{Complexity model}

Throughout this paper, field operations $(+, -, \times, \div)$ and the equality test are assumed to have a constant cost. Given polynomials $f,g\in K[x]$ of degree at most $n$, we can compute their product (and the remainder of $f$ divided by $g$) with $\mathcal{O}(M(n))$ field operations. We recall that $M$ is \textit{super-additive}: $M(n_1)+M(n_2)\leq M(n_1+n_2)$ (see \cite{MCA}, Chapter 8.3). If $f\in K[x]$ is irreducible with degree $n$, then arithmetic in $K[x]/(f)$ costs $\mathcal{O}(M(n))$ operations in $K$ (see \cite{MCA}, Chapter 9). Furthermore, the greatest common divisor of two polynomials $f,g$ of degree $\leq n$ costs $\mathcal{O}(M(n)\log n)$ field operations (see \cite{MCA}, Chapter 11). 
Finally, given a linear system $\mathcal{S}$, with $m$ equations in $r$ variables, we can compute a basis of solutions of $\mathcal{S}$ with $\mathcal{O}(mr^{\omega-1})$ field operations (see \cite{bini}, Chapter 2), where $2<\omega \leq 3$ is a \textit{feasible matrix multiplication exponent} (see \cite{MCA}, Chapter 12) . 

\section{Basic Definitions}\label{sec:2}
Let $K$ be an arbitrary field and let $K(t)$ be the function field over $K$. Let $\mathbb{S}=K(t)\backslash K$ be the set of non-constant rational functions and let $f=f_n/f_d\in \mathbb{S}$ be a rational function with $f_n, f_d\in K[t]$ coprime. The \textit{degree} of $f$ is defined as $\max \{ \deg(f_n), \deg(f_d) \}$ and denoted by $\deg(f)$. The set $\mathbb{S}$ is equipped with a structure of a monoid under composition. The $K$-automorphisms of $K(t)$ are the fractional transformations $u=(ax+b)/(cx+d)$ such that $ad-bc\neq 0$. The group of automorphisms is isomorphic to $PGL_2(K)$ and also to the group of units of $\mathbb{S}$ under composition.

An element $f\in K(t)$ is \textit{indecomposable} if $f$ is not a unit and $f=g \circ h$ implies $g$ or $h$ is a unit. Otherwise, $f$ is called \textit{decomposable}. If $f$ is decomposable with $f=g\circ h$, then $h$ (resp. $g$) is called the \textit{right} (resp. \textit{left}) \textit{component} of the decomposition $g\circ h$. Furthermore, a decomposition $f=g\circ h$ is \textit{minimal} if $h$ is indecomposable and a decomposition $f=g_m \circ \cdots \circ g_1$ is \textit{complete} if all $g_i$ are indecomposable.

 It is well known by L\"uroth's Theorem that if $K \subsetneq L \subseteq K(t),$ then there exists $h\in \mathbb{S}$ such that $L=K(h)$ (a proof can be found in \cite{ModernAlgebra}). The rational function $h$ is not unique however, $K(h)=K(h')$, if and only if, there exists a unit $u\in \mathbb{S}$ such that $h'=u \circ h$.
 As in \cite{Ayad2008}, we define the \textit{normal form} of a rational function $f\in \mathbb{S}$.
\begin{definition}\label{def:normal}
A rational function $f=p/q\in \mathbb{S}$ is in \textit{normal form} or \textit{normalized} if $p, q\in K[t]$ are monic, coprime, $p(0)=0$ and either $deg(p)>\deg(q)$ or $m:=\deg(p)<\deg(q)=:n$ and $q=t^n+q_{n-1}t^{n-1}+\cdots+q_0,$ with $q_m=0$.
\end{definition}

Given $f\in \mathbb{S}$, there exists a unique normalized $\hat{f}\in \mathbb{S}$ such that $K(f)=K(\hat{f})$ (\cite{Ayad2008}, Proposition 2.1). Hence, if $\mathcal{N}_K$ is the set of all normalized rational functions over $K$, then there exists a bijection between $\mathcal{N}_K$ and the set of fields $L$ such that $K \subsetneq L \subseteq K(t)$. 
In particular, there is a bijection between normalized rational functions $h\in \mathbb{S}$ such that $f=g\circ h$, for some $g\in \mathbb{S}$, and the fields $L=K(h)$ such that $K(f)\subseteq L\subseteq K(t)$.

\begin{definition}
For a rational function $g=g_n/g_d\in \mathbb{S}$, with $g_n, g_d\in K[t]$ coprime, define $\nabla_g(x,t):=g_n(x)g_d(t)-g_n(t)g_d(x)\in K[x,t]$ and $ \Phi_g(x):=g_n(x)-g(t)g_d(x)\in K(g)[x]. $ A bivariate polynomial $a(x,t)\in K[x,t]$ is called \textit{near-separate} if $ a(x,t)=\nabla_{g}(x,t)$, for some $g\in K(t)$.
\end{definition}

In this work, we assume that $f$ is such that $\Phi_f$ is monic. If this is not the case, we can find a unit $u \in K(t)$ such that $\tilde{f}:=u\circ f$ and $\Phi_{\tilde{f}}$ is monic. Decomposing $f$ is equivalent to decomposing $\tilde{f}.$

\newtheorem{remark}{Remark}
\begin{remark}\label{rem:fact}
Let $f\in K(t)$ of degree $n$ and let $G_1,\ldots, G_r$ be the irreducible factors of $\nabla_f\in K[x,t]$. Let $m_1,\ldots, m_r\in K[t]$ be the leading coefficients of $G_1,\ldots, G_r$ w.r.t. $x$. Then $m_1 \cdots m_r = f_d(t)$ and $F_i:=G_i/m_i\in K(t)[x]$ are monic, irreducible and $\nabla_f/f_d(t)=\Phi_f(x)=F_1\cdots F_r.$ In particular, if the exponents of $t$ in $G_i$ are bounded by $d_i$, then $\sum d_i =n$.

\end{remark}

The following theorem is the key result behind all \textit{near-separate based} rational function decomposition algorithms, such as \cite{ALONSO1995} and \cite{Ayad2008} (see also \cite{Barton1985} for the polynomial case).

\begin{theorem}[\cite{ALONSO1995}, Proposition 3.1]\label{theo:main}
Let $f,h\in \mathbb{S}$ be rational functions. The following are equivalent:
\begin{itemize}
\item[$a)$] $K(f)\subseteq K(h)\subseteq K(t)$.
\item[$b)$] $f=g\circ h$, for some $g\in \mathbb{S}$.
\item[$c)$] $\nabla_{h}(x,t)$ divides $\nabla_{f}(x,t)$ in $K[x,t]$.
\item[$d)$] $\Phi_h(x)$ divides $\Phi_f(x)$ in $K(t)[x]$.
\end{itemize}
\end{theorem}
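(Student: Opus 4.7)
The plan is to establish the cycle (a) $\Leftrightarrow$ (b) $\Leftrightarrow$ (d) $\Leftrightarrow$ (c). The equivalence (a) $\Leftrightarrow$ (b) is immediate from the characterization $K(h) = \{g(h) : g \in K(x)\}$: saying $K(f) \subseteq K(h)$ is the same as saying $f = g(h) = g \circ h$ for some $g \in \mathbb{S}$.

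For (b) $\Leftrightarrow$ (d), the central observation is that $\Phi_h$ is exactly the minimal polynomial of $t$ over $K(h)$: indeed $\Phi_h$ is monic in $x$, lies in $K(h)[x]$ by construction, has $t$ as a root (since $\Phi_h(t) = h_n(t) - h(t)h_d(t) = 0$), and has degree $\deg h = [K(t):K(h)]$ by L\"uroth's theorem; the analogous statement holds for $\Phi_f$ over $K(f)$. With this in hand, (b) $\Rightarrow$ (d) is immediate: if $K(f) \subseteq K(h)$ then $\Phi_f \in K(f)[x] \subseteq K(h)[x]$, and since $\Phi_h$ is the monic minimal polynomial of $t$ over $K(h)$, it divides any element of $K(h)[x]$ vanishing at $t$, giving $\Phi_h \mid \Phi_f$ in $K(h)[x]$, hence in $K(t)[x]$.

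For the converse (d) $\Rightarrow$ (b) I would argue via roots and fibers: in an algebraic closure of $K(t)$, the roots of $\Phi_h$ are exactly the $\alpha$ satisfying $h(\alpha) = h(t)$, and the divisibility $\Phi_h \mid \Phi_f$ forces each such $\alpha$ to also satisfy $f(\alpha) = f(t)$. Letting $\sigma$ be any automorphism of the Galois closure of $K(t)/K(h)$ fixing $K(h)$, $\sigma(t)$ is such an $\alpha$, so $\sigma(f) = f$; by the Galois correspondence $f$ then lies in the fixed field $K(h)$. This is the main obstacle of the proof; in positive characteristic one must first verify separability of $K(t)/K(h)$ before invoking Galois theory, or alternatively give a direct degree-counting argument via L\"uroth applied to the compositum $K(f,h)$.

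Finally, for (c) $\Leftrightarrow$ (d), I would use the identity $\nabla_g = g_d(t)\,\Phi_g$ from Remark~\ref{rem:fact}, which converts divisibilities between the two rings after multiplying/dividing by units of $K(t)$. The only nontrivial point is the direction $K(t)[x] \Rightarrow K[x,t]$: writing $\Phi_f = \Phi_h\, Q$ in $K(t)[x]$ gives $\nabla_f = (f_d/h_d)\,\nabla_h\, Q$, and I would invoke Gauss's lemma to conclude $(f_d/h_d)\,Q \in K[t][x]$. This is legitimate because Remark~\ref{rem:fact} exhibits $\nabla_h$ as a product of polynomials irreducible in $K[x,t]$ of positive $x$-degree, each of which is primitive in $K[t][x]$, so $\nabla_h$ itself is primitive, and a multiple of a primitive polynomial lying in $K[t][x]$ forces the cofactor into $K[t][x]$.
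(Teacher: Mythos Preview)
The paper does not prove this theorem: it is quoted verbatim from \cite{ALONSO1995}, Proposition~3.1, with no accompanying argument, so there is nothing in the paper to compare your attempt against.

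On the merits of your proof itself: the equivalences (a)\,$\Leftrightarrow$\,(b) and (b)\,$\Rightarrow$\,(d) are handled correctly, as is (c)\,$\Leftrightarrow$\,(d) via the relation $\nabla_g = g_d(t)\Phi_g$ and Gauss's lemma (your primitivity claim for $\nabla_h$ over $K[t]$ is true, though Remark~\ref{rem:fact} is literally stated only for $\nabla_f$; the same reasoning applies to any $\nabla_g$, or one checks directly that a nonconstant $c(t)$ dividing $\nabla_h$ would force a common root of $h_n,h_d$). The one genuine soft spot is (d)\,$\Rightarrow$\,(b): your Galois argument is fine when $K(t)/K(h)$ is separable, and you correctly flag the inseparable case, but you do not actually close it. The degree-counting alternative you allude to can be made to work without separability: set $L=K(f,h)$, note $\Phi_f,\Phi_h\in L[x]$, and use that Euclidean division of $\Phi_f$ by $\Phi_h$ already takes place in $L[x]$, so the hypothesis $\Phi_h\mid\Phi_f$ in $K(t)[x]$ forces $\Phi_h\mid\Phi_f$ in $L[x]$; but then one still needs to show $\Phi_h$ remains irreducible over $L$, which is exactly the statement $L=K(h)$ one is after. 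A cleaner characteristic-free route is to go via (c): for all but finitely many $t_0\in\bar K$ the divisibility $\nabla_h(x,t_0)\mid\nabla_f(x,t_0)$ says the fibre $h^{-1}(h(t_0))$ is contained in the fibre $f^{-1}(f(t_0))$, and one then constructs $g$ directly by interpolation or by a resultant. Since the paper is content to cite \cite{ALONSO1995} for this, either completing the inseparable case or simply restricting to the separable setting (as the paper does globally in Remark~\ref{rem:sepp}) would be acceptable here.
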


If $G_1,\ldots, G_r$ are the irreducible factors of $\nabla_f$ over $K[x,t]$, then the product of any subset of $\{G_1,\ldots, G_r\}$, which is a near-separate multiple of $x-t$, yields a right component $h$ and hence, a decomposition $f=g\circ h.$ Many authors use this approach to compute all decompositions of $f$: factor $\nabla_f$ and search for near-separate factors (see  \cite{ALONSO1995, Ayad2008, Barton1985}). However, this approach leads to exponential time algorithms due to the number of factors we have to consider.

\section{Principal Subfields}\label{sec:3}

In this section we use the idea of \textit{principal subfields} to compute the subfield lattice of $K(t)/K(f)$. By Theorem \ref{theo:main}, this gives us all complete decompositions of $f$. Principal subfields and fast field intersection techniques (see \cite{Szutkoski2016}) allow us to improve the non-polynomial part of the complexity.

\subsection{Main Theorem}
Let $K/k$ be a separable field extension of finite degree $n$. A field $L$ is said to be \textit{a subfield of} $K/k$ if $k\subseteq L \subseteq K$. It is well known that the number of subfields of $K/k$ is not polynomially bounded in general. However, we have the following remarkable result from \cite{vanHoeij2013}: 

\begin{theorem}\label{theo:main3}
Given a separable field extension $K/k$ of finite degree $n$, there exists a set $\{L_1,\ldots, L_r\}$, with $r\leq n$, of subfields of $K/k$ such that, for any subfield $L$ of $K/k$, there exists a subset $I_L\subseteq \{1,\ldots, r\}$ with \[ L=\bigcap_{i\in I_L} L_i. \]
\end{theorem}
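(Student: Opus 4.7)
The plan is to construct the $L_i$'s explicitly from the factorization over $K$ of the minimal polynomial of a primitive element, and then show that every intermediate field is an intersection of these. Since $K/k$ is finite separable, pick a primitive element $\alpha$ with $K=k(\alpha)$ and let $m(y)\in k[y]$ of degree $n$ be its minimal polynomial. Separability makes $m=f_1\cdots f_r$ a product of distinct monic irreducibles in $K[y]$ with $r\le n$; note $f_1=y-\alpha$ is one of them. For each $i$ define
\[ L_i := \{\, a \in K \,:\, f_i(y) \text{ divides } p_a(y) - a \text{ in } K[y] \,\}, \]
where $p_a\in k[y]$ is any polynomial with $p_a(\alpha)=a$. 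This is independent of the choice of $p_a$, since any two such representatives differ by a $k[y]$-multiple of $m$ and $f_i\mid m$. Noting that $L_i$ is the locus where the $k$-algebra map $a\mapsto p_a(y)\bmod f_i$ from $K$ to $K[y]/(f_i)$ agrees with the constant embedding, one sees $L_i$ is a subfield of $K$ containing $k$.

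Now let $L$ be an arbitrary intermediate field and let $g\in L[y]$ be the minimal polynomial of $\alpha$ over $L$. Then $g\mid m$ in $L[y]$, so by unique factorization in $K[y]$ there is a unique $I_L\subseteq\{1,\ldots,r\}$ with $g=\prod_{i\in I_L}f_i$. The claim is $L=\bigcap_{i\in I_L}L_i$. The inclusion $L\subseteq\bigcap_{i\in I_L}L_i$ is immediate: if $a\in L$, then $p_a(y)-a\in L[y]$ vanishes at $\alpha$, so $g\mid p_a(y)-a$ in $L[y]$, whence each $f_i$ with $i\in I_L$ divides $p_a(y)-a$ in $K[y]$.

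The reverse inclusion is the heart of the argument. Given $a\in\bigcap_{i\in I_L}L_i$, pairwise coprimality of the $f_i$ in $K[y]$ promotes the individual divisibilities to $g\mid p_a(y)-a$ in $K[y]$. To descend to $L$, perform Euclidean division in $L[y]$: $p_a=gq+r$ with $q,r\in L[y]$ and $\deg r<\deg g$. Then $a=p_a(\alpha)=r(\alpha)$, so $p_a(y)-a=g(y)q(y)+(r(y)-r(\alpha))$, which forces $g\mid r(y)-r(\alpha)$ in $K[y]$. The degree bound makes $r(y)-r(\alpha)=0$, so $r$ is constant and $a=r(\alpha)\in L$. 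The main obstacle is precisely this descent step: the divisibility by $g$ is only known in $K[y]$, but performing Euclidean division over $L[y]$ exhibits $a$ as the constant remainder of an $L[y]$-polynomial, hence automatically in $L$.
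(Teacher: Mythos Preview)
Your proof is correct. Note, however, that the paper does not itself prove this theorem: it quotes the result from \cite{vanHoeij2013} and only remarks that the $L_i$ ``can be obtained as the kernel of some application.'' What the paper \emph{does} prove is the specialization to $K(t)/K(f(t))$ in Section~3.2, and there the construction is exactly parallel to yours: the principal subfields are defined by $L_j=\{g(t)\in K(t): F_j\mid \Phi_g\}$, which is your condition $f_i\mid p_a(y)-a$ rewritten for rational functions, and the intersection identity (Equation~(\ref{eq:inter})) is proved via the same two inclusions, using separability to combine the individual divisibilities $F_i\mid\Phi_g$ into $\Phi_h\mid\Phi_g$ and then invoking Theorem~\ref{theo:main} for the descent step. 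Your Euclidean-division argument for the descent is a clean, self-contained replacement for that citation, and in fact makes the general theorem no harder than the special case the paper treats. So your approach is essentially the same as the paper's (specialized) argument, just carried out in the abstract setting the paper chose to delegate to a reference.
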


 The subfields $L_1,\ldots,L_r$ are called \textit{principal subfields} of the extension $K/k$ and can be obtained as the kernel of some application (see \cite{vanHoeij2013}). Instead of directly searching for all subfields of a field extension, which leads to an exponential time complexity, principal subfields allow us to search for a specific set of $r\leq n$ subfields, a polynomial time task. 
 
 By Theorem \ref{theo:main3}, the non-polynomial part of the complexity of computing the subfield lattice is then transfered to computing all intersections of the principal subfields. However, according to \cite{Szutkoski2016}, each subfield of $K/k$ can be uniquely represented by a partition of $\{1,\ldots, r\}$. Computing intersections of principal subfields can now be done by simply joining the corresponding partitions of $\{1,\ldots, r\}$, which in practice can be done very quickly and hence, corresponds to a very small percentage of the total CPU time. 

In the remaining of this section we give a description of the principal subfields of $K(t)/K(f(t))$ and in the next section we show how one can compute the partitions associated to every principal subfield of $K(t)/K(f(t))$.

\subsection{Principal Subfields of $K(t)/K(f)$}
In this section we describe the principal subfields of the field extension $K(t)/K(f)$. We follow \cite{vanHoeij2013}, making the necessary changes to our specific case. 

\begin{remark}\label{rem:sepp}
If char($K)=0$, then $\Phi_f$ is separable. If char($K)=p>0$ and $\Phi_f$ is not separable, then $f=\tilde{f}\circ t^{p^s}$, for some $s\geq 1$ and $\tilde{f}\in K(t)$ with $\Phi_{\tilde{f}}$ separable. For this reason, we assume that $\Phi_f$ is separable.
\end{remark}

\begin{definition}\label{def:Li}
Let $F_1,\ldots, F_r$ be the monic irreducible factors of $\Phi_f$ over $K(t)$. For $j=1,\ldots,r$, define the set \begin{equation}\label{eq:Li}
L_j:=\left\{ g(t) \in K(t) \; : \; F_j \mid \Phi_g
 \right\}.
\end{equation}
\end{definition}
If we assume that $F_1=x-t$, then $L_1=K(t)$. Furthermore

\begin{theorem}
Let $F_1,\ldots, F_r$ be the irreducible factors of $\Phi_f$ over $K(t)$. Then $L_1,\ldots, L_r$ are subfields of $K(t)/K(f)$.  
\end{theorem}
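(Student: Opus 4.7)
The plan is to realize each $L_j$ as the equalizer of two $K$-algebra maps into a common field, from which the subfield property will be essentially automatic. Fix $j$ and set $E_j := K(t)[x]/(F_j)$; since $F_j$ is irreducible in $K(t)[x]$, $E_j$ is a field, and $F_j$ is the minimal polynomial over $K(t)$ of $\alpha := \overline{x} \in E_j$.

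I would next set up two $K$-algebra maps $K(t) \to E_j$: the inclusion $\iota_1$, and the partial evaluation $\iota_2 : g \mapsto g(\alpha)$, defined on each $g = g_n/g_d \in K(t)$ (in lowest terms) with $g_d(\alpha) \neq 0$. The key translation is that, because $F_j$ is the minimal polynomial of $\alpha$ over $K(t)$,
\[
F_j \mid \Phi_g \ \text{in}\ K(t)[x] \iff g_n(\alpha) = g(t)\,g_d(\alpha) \ \text{in}\ E_j.
\]
Coprimality of $g_n$ and $g_d$ forbids $g_n(\alpha) = g_d(\alpha) = 0$, so the right-hand side forces $g_d(\alpha) \neq 0$; after dividing by $g_d(\alpha)$ it becomes exactly $\iota_2(g) = \iota_1(g)$. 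Thus $L_j$ coincides with the equalizer of $\iota_1$ and $\iota_2$ on the subring of $K(t)$ where $\iota_2$ is defined.

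The equalizer of two ring homomorphisms from a field into any nonzero ring is closed under addition, multiplication and nonzero inverses, the last because both maps are injective and preserve inverses. Therefore $L_j$ is a subfield of $K(t)$. Finally, $F_j$ is by construction an irreducible factor of $\Phi_f$, so $F_j \mid \Phi_f$ gives $f \in L_j$; as $L_j$ is a field, $K(f) \subseteq L_j \subseteq K(t)$, which is what the theorem asserts.

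The only slightly delicate step is the reformulation of the divisibility condition as an equalizer condition, which hinges on the observation that coprimality of $g_n, g_d$ automatically places $g$ in the domain of $\iota_2$ as soon as $F_j \mid \Phi_g$. Once that is in place, the closure properties of an equalizer of field homomorphisms are immediate, and I do not foresee any further obstacle.
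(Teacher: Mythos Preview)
Your proof is correct, but it takes a genuinely different route from the paper. The paper verifies the field axioms for $L_j$ by bare hands: it produces the explicit identities
\[
\Phi_{gh} = g_n(x)\,\Phi_h + h(t)h_d(x)\,\Phi_g,\qquad \Phi_g = -g(t)\,\Phi_{1/g},
\]
and reads off closure under multiplication and inversion directly from divisibility by $F_j$, leaving the remaining axioms to the reader. Your argument is structural instead: you introduce the field $E_j = K(t)[x]/(F_j)$, interpret $F_j \mid \Phi_g$ as the coincidence $\iota_1(g) = \iota_2(g)$ of two $K$-algebra maps into $E_j$, and then invoke the general fact that an equalizer of ring homomorphisms into a domain is closed under the field operations. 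This buys you all the axioms at once and explains \emph{why} $L_j$ is a field rather than merely checking it; the paper's approach, in exchange, is more self-contained and yields concrete identities that are sometimes useful downstream.

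One small streamlining you could make: your care about the domain of $\iota_2$ is in fact unnecessary. From $\Phi_f(\alpha)=0$ one gets $f_n(\alpha) = f(t)\,f_d(\alpha)$; coprimality of $f_n, f_d$ then forces $f_d(\alpha)\neq 0$, whence $f(t) = f_n(\alpha)/f_d(\alpha)$, and since $f(t)$ is transcendental over $K$ so is $\alpha$. Thus $t \mapsto \alpha$ extends to a genuine field embedding $\iota_2 : K(t) \hookrightarrow E_j$, and $L_j$ is literally the equalizer of two field homomorphisms $K(t) \to E_j$, with no partial-domain caveat needed. Your workaround is not wrong, but noting this would tighten the exposition.
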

\begin{proof}
We show that $L_j$ is closed under multiplication and taking inverse. The remaining properties can be shown in the same fashion.
Let $g(t)=g_n(t)/g_d(t)$ and $h(t)=h_n(t)/h_d(t)$ be elements of $L_j$. By definition, \begin{equation}\label{eq:div1}
F_j\mid \Phi_g \text{ and } F_j\mid \Phi_h.
\end{equation}
Now $g(t)h(t)\in L_j$ if and only if, $F_j \mid \Phi_{gh}$.
By a simple manipulation, one can show that
\begin{equation}
\Phi_{gh}=g_n(x)\Phi_h+h(t)h_d(x)\Phi_g.
\end{equation}
Therefore, by Equation (\ref{eq:div1}), it follows that $F_j\mid \Phi_{gh}$ and hence, $g(t)h(t)\in L_j$. To show that the inverse of $g(t)$ is in $L_j$, notice that \begin{equation}\label{eq:inv}
F_j\mid \Phi_g \text{ if and only if } F_j \mid \Phi_{1/g},
\end{equation} since $\Phi_g = -g(t)\Phi_{1/g}$ in $K(t)[x]$. Therefore, $1/g(t)\in L_j$.
\end{proof}

Finally, we show that the subfields $L_1,\ldots, L_r$ are the principal subfields of $K(t)/K(f)$.

\begin{theorem}
The subfields $L_1,\ldots, L_r$ of $K(t)/K(f(t))$, where $L_j$ is defined as in $(\ref{eq:Li})$, for $j=1,\ldots, r$, are the principal subfields of the extension $K(t)/K(f(t)).$
\end{theorem}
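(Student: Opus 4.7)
My plan is to verify that the family $\{L_1,\ldots,L_r\}$ satisfies the characterizing property from Theorem \ref{theo:main3}: every intermediate field $L$ with $K(f)\subseteq L\subseteq K(t)$ can be written as $L=\bigcap_{j\in I_L} L_j$ for some $I_L\subseteq\{1,\ldots,r\}$. Since $r\leq n=\deg\Phi_f$ by Remark \ref{rem:fact}, this identifies $\{L_1,\ldots,L_r\}$ with the principal subfields constructed in \cite{vanHoeij2013}, because the defining condition $F_j\mid\Phi_g$ is precisely the translation to our setting of the congruence ``$g(t)\equiv g(\beta_j)\pmod{F_j}$'' used there to build principal subfields from irreducible factors of the minimal polynomial.

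I would start by fixing an arbitrary subfield $L$ of $K(t)/K(f)$. By L\"uroth's theorem, $L=K(h)$ for some $h\in K(t)$. Applying the equivalence $(a)\Leftrightarrow(d)$ of Theorem \ref{theo:main} to the pair $(f,h)$ gives $\Phi_h\mid\Phi_f$ in $K(t)[x]$. By Remark \ref{rem:sepp}, $\Phi_f$ is separable, so its factors $F_1,\ldots,F_r$ are pairwise distinct monic irreducibles, and $\Phi_h$ must equal $\prod_{j\in J}F_j$ for a uniquely determined $J\subseteq\{1,\ldots,r\}$. I would then set $I_L:=J$ and claim $L=\bigcap_{j\in I_L}L_j$.

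The forward inclusion is essentially a definitional check: for each $j\in I_L$, $F_j\mid\Phi_h$ gives $h\in L_j$, and since $L_j$ is a subfield containing $K$ it contains all of $L=K(h)$. The reverse inclusion is where the real content sits: given $g\in\bigcap_{j\in I_L}L_j$, we have $F_j\mid\Phi_g$ for every $j\in I_L$; since the $F_j$'s are pairwise coprime in the UFD $K(t)[x]$, their product $\Phi_h$ divides $\Phi_g$. Invoking Theorem \ref{theo:main} once more, this time applied to the pair $(g,h)$, yields $K(g)\subseteq K(h)=L$, so $g\in L$ (the degenerate case $g\in K$ is automatic since $K\subseteq L$).

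The proof has no serious obstacle; the one subtle point I would be careful about is the second application of Theorem \ref{theo:main}, which requires noting that the theorem's statement is symmetric in the roles of ``$f$'' and the would-be quotient, so it applies equally well to any pair in $\mathbb{S}$. Beyond that, the only conceptual ingredient to highlight for the reader is why verifying the intersection property is enough to conclude the $L_j$ are the principal subfields, which I would do by pointing to the construction in \cite{vanHoeij2013} and matching the congruence condition with our $F_j\mid\Phi_g$.
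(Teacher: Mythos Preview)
Your proposal is correct and follows essentially the same route as the paper: both use L\"uroth to write $L=K(h)$, invoke Theorem~\ref{theo:main} to obtain $\Phi_h=\prod_{j\in I_L}F_j$, and then verify $L=\bigcap_{j\in I_L}L_j$ via the two inclusions, using separability (coprimeness of the $F_j$) for the reverse direction. The only cosmetic difference is that the paper inserts the intermediate description $L=\{g\in K(t):\Phi_h\mid\Phi_g\}$, whereas you argue the forward inclusion directly from $h\in L_j$; both are equivalent.
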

\begin{proof}

Given a subfield $L$ of $K(t)/K(f(t))$, by  L\"uroth's Theorem, there exists a rational function $h(t)\in K(t)$ such that $L=K(h(t))$ and therefore, $f=g\circ h$, for some $g\in K(t)$. By Theorem \ref{theo:main} it follows that $\Phi_h\mid \Phi_f$. Therefore, there exists a set $I_L\subseteq \{1,\ldots, r\}$ such that $\Phi_h = \prod_{i\in I_L} F_i$. We shall prove that \begin{equation}\label{eq:inter}
L = \{ g(t)\in K(t) \; : \; \Phi_h \mid \Phi_g \} = \bigcap_{i\in I_L} L_i.
\end{equation}

Let $g(t)\in K(t)$. Then $g(t)\in L=K(h)$ if and only if $g(t)=\tilde{g}\circ h (t)$, for some $\tilde{g}(t)\in K(t)$, if and only if  $\Phi_h \mid \Phi_g$, by Theorem \ref{theo:main}.
For the second equality, suppose that $g(t)\in \cap_{i\in I_L} L_i$. Then $F_i\mid \Phi_g$, for every $i\in I_L$. Since we are assuming $\Phi_f$ to be separable (see Remark \ref{rem:sepp}), it follows that $\Phi_h = \prod_{i\in I_L} F_i \mid \Phi_g$. Conversely, if $\Phi_h\mid \Phi_g$, then $F_i\mid \Phi_g$, for every $i\in I_L$, that is, $g(t)\in L_i$, for every $i\in I_L$ and hence, $g(t)\in \cap_{i\in I_L} L_i$.
\end{proof}

\section{Partition of Principal Subfields}\label{sec:4}

Let $K(t)/K(f)$ be a separable field extension of finite degree $n$ and let $\Phi_f(x)$ be the minimal polynomial of $t$ over $K(f)$. Let $F_1,\ldots, F_r$ be the irreducible factors of $\Phi_f$ over $K(t)$ and let $L_1,\ldots, L_r$ be the corresponding principal subfields of $K(t)/K(f)$. 

\begin{definition}
A \textit{partition} of $S=\{1,\ldots, r\}$ is a set $\{P^{(1)},\ldots, P^{(s)} \}$ such that $P^{(i)}\subseteq S$, $P^{(i)}\cap P^{(j)} = \emptyset$, for every $i\neq j$ and $\cup P^{(i)}=S$. 
\end{definition}

\begin{definition}
Let $P$ and $Q$ be partitions of $\{1,\ldots, r\}$. We say that $P$ \textit{refines} $Q$ if every part of $P$ is contained in some part of $Q$.
\end{definition}

Recall that $F_1=x-t$. We number the parts of a partition $P=\{ P^{(1)},\ldots, P^{(s)}\}$ in such a way that $1\in P^{(1)}$. Let $P$ be a partition of $\{1,\ldots, r\}$. We say that $P$ is the \textit{finest partition} satisfying some property $X$ if $P$ satisfies $X$ and if $Q$ also satisfies $X$ then $P$ refines $Q$. Moreover, the \textit{join} of two partition $P$ and $Q$ is denoted by $P \vee Q$ and is the finest partition that is refined by both $P$ and $Q$.

 \begin{definition}
Let $F_1,\ldots, F_r$ be the irreducible factors of $\Phi_f$ over $K(t)$. Given a partition $P=\{P^{(1)},\ldots, P^{(s)}\}$ of $\{1,\ldots, r\}$, define the polynomials (so called $P$-\textit{products}) \[g_i:=\prod_{j\in P^{(i)}} F_j\in K(t)[x], \; i=1,\ldots, s.\] 
\end{definition} 

\begin{theorem}[\cite{Szutkoski2016}, Section 2]\label{theo:main2}
Let $f\in K(t)$ and let $F_1,\ldots, F_r$ be the irreducible factors of $\Phi_f$ over $K(t)$. Given a subfield $L$ of $K(t)/K(f)$, there exists a unique partition $P_L=\{P^{(1)},\ldots, P^{(s)}\}$ of $\{1,\ldots, r\}$, called \textit{the partition of} $L$, such that $s$ is maximal with the property that the $P_L$-products are polynomials in $L[x]$. Furthermore, $P_{L \cap L'}=P_L \vee P_{L'}$, that is, the partition of $L\cap L'$ is the join of the partitions $P_L$ and $P_{L'}$ of $L$ and $L'$, respectively.
\end{theorem}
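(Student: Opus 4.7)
\medskip

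\noindent\textbf{Proof plan.} The plan is to characterize $P_L$ as the partition that records the irreducible factorization of $\Phi_f$ over $L$, and then derive both the maximality statement and the join formula from that characterization.

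\emph{Step 1: Existence via factorization over $L$.} Since $K(f)\subseteq L\subseteq K(t)$ and $\Phi_f\in K(f)[x]\subseteq L[x]$, I factor $\Phi_f$ into monic irreducible factors $g_1,\ldots,g_s\in L[x]$. Passing to the larger ring $K(t)[x]$, each $g_i$ splits further into a product of certain $F_j$'s, and this yields a partition $P_L=\{P^{(1)},\ldots,P^{(s)}\}$ of $\{1,\ldots,r\}$ by defining $j\in P^{(i)}$ iff $F_j\mid g_i$ in $K(t)[x]$. The separability assumption on $\Phi_f$ (Remark \ref{rem:sepp}) guarantees that each $F_j$ appears in exactly one $g_i$, so this really is a partition. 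By construction the $P_L$-products are precisely the $g_i\in L[x]$.

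\emph{Step 2: Maximality and uniqueness.} The key claim is that $P_L$ is the \emph{finest} partition of $\{1,\ldots,r\}$ whose products lie in $L[x]$; this will give both uniqueness and the maximality of $s=|P_L|$. Let $Q=\{Q^{(1)},\ldots,Q^{(u)}\}$ be any partition with $Q$-products $h_1,\ldots,h_u\in L[x]$. Each $h_i$ is a monic divisor of $\Phi_f$ in $L[x]$, hence (since $L[x]$ is a UFD and the $g_j$ are the irreducible factors of $\Phi_f$ over $L$) $h_i$ is a product of a unique subset of the $g_j$'s. Translating through Step 1, $Q^{(i)}$ is a union of parts of $P_L$, i.e.\ $P_L$ refines $Q$. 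This shows $P_L$ is the finest partition with $L$-polynomial products; in particular $s\geq u$, proving maximality, and uniqueness follows at once.

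\emph{Step 3: The join formula.} To prove $P_{L\cap L'}=P_L\vee P_{L'}$, I use the characterization of Step 2 twice. First, since $P_L$ refines $P_L\vee P_{L'}$, each $(P_L\vee P_{L'})$-product is a product of certain $P_L$-products and therefore lies in $L[x]$; symmetrically it lies in $L'[x]$, hence in $L[x]\cap L'[x]=(L\cap L')[x]$. Second, suppose $Q$ is any partition with $Q$-products in $(L\cap L')[x]$. Then its products lie in $L[x]$, so by the finest-partition characterization $P_L$ refines $Q$; likewise $P_{L'}$ refines $Q$. By the definition of join, $P_L\vee P_{L'}$ refines $Q$. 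Thus $P_L\vee P_{L'}$ is itself the finest partition whose products belong to $(L\cap L')[x]$, and by Step 2 applied to $L\cap L'$ it must equal $P_{L\cap L'}$.

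\emph{Main obstacle.} The only subtle point is Step 2: one must invoke separability of $\Phi_f$ to know that the $F_j$ are pairwise distinct, so that the correspondence between irreducible factorizations over $L$ and over $K(t)$ really gives a partition (rather than a multiset) and so that divisibility in $L[x]$ translates cleanly to a subset-choice among $\{F_1,\ldots,F_r\}$. Once that is in place, everything reduces to unique factorization in $L[x]$ and the lattice definition of the join.
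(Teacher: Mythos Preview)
Your proof is correct. Note, however, that the paper does not supply its own proof of this theorem: it is quoted from \cite{Szutkoski2016}, Section 2, and simply remarked afterward that ``$P_L$ represents the factorization of $\Phi_f$ over $L$.'' Your Step~1 is exactly this observation, and the rest of your argument (finest-partition characterization via unique factorization in $L[x]$, then the join formula) is the standard way to flesh it out, so your approach matches the intended one.
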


Since $F_1,\ldots, F_r$ are the irreducible factors of $\Phi_f$ over $K(t)$, $P_L$ represents the factorization of $\Phi_f$ over $L$.
 Algorithms for computing the join of two partitions can be found in \cite{Part, Szutkoski2016} (see also \cite{Freese2008}).

Since $1\in P_L^{(1)}$, the first $P_L$-product is the minimal polynomial of $t$ over $L$. 
 As in \cite{Szutkoski2016}, we give two algorithms for computing the partition of the principal subfield $L_i$: one deterministic and one probabilistic, with better performance.

\subsection{A Deterministic Algorithm}

In this section we present a deterministic algorithm that computes, by solving a linear system, the partitions $P_1,\ldots, P_r$ of the principal subfields $L_1,\ldots, L_r$.
We recall (see \cite{Szutkoski2016}, Section 3) that to find the partition of $L_i$ it is enough to find a basis of the vectors $(e_1,\ldots, e_r)\in \{0,1\}^r$ such that $\prod_{j=1}^r F_j^{e_j} \in L_i[x].$ 

\begin{theorem}[\cite{Szutkoski2016}, Lemmas 31 and 32]\label{theo:Fi}
Let $c_1,\ldots, c_{2n}\in K(f)$ be distinct elements and let $h_{j,k}(t):=F'_j(c_k)/F_j(c_k)\in K(t)$. If $(e_1,\ldots, e_r)\in \{0,1\}^r$ is such that $\sum_{j=1}^r e_j h_{j,k}(t)\in L_i$, for $k=1,\ldots, 2n$, then $\prod_{j=1}^r F_j^{e_j}\in L_i[x].$
\end{theorem}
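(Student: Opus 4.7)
Let $g(x) := \prod_{j=1}^{r} F_j(x)^{e_j} \in K(t)[x]$. Each $F_j$ is monic and, since $\Phi_f = F_1\cdots F_r$ is separable, $g$ is monic and squarefree of degree at most $n$ in $x$. Differentiating the product gives
\[ \frac{g'(x)}{g(x)} \;=\; \sum_{j=1}^{r} e_j\,\frac{F_j'(x)}{F_j(x)} \]
in $K(t)(x)$, so specializing at $x = c_k$ rewrites the hypothesis as $y_k := g'(c_k)/g(c_k) \in L_i$ for $k = 1, \ldots, 2n$.

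The strategy is to reconstruct $g'/g$ by Pad\'e-style rational interpolation from these $2n$ values and conclude that the reconstruction---and hence $g$ itself---has coefficients in $L_i$. Consider the homogeneous $L_i$-linear system in the $2n+1$ unknown coefficients of a pair $(P,Q) \in L_i[x]^2$ with $\deg P < n$ and $\deg Q \leq n$,
\[ P(c_k) \;-\; y_k\,Q(c_k) \;=\; 0, \qquad k = 1, \ldots, 2n. \]
All coefficients lie in $L_i$ (indeed in $K(f) \subseteq L_i$), so the rank and kernel dimension of this system are preserved under the base change $L_i \hookrightarrow K(t)$.

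To show the kernel is one-dimensional, I work over $K(t)$. Multiplying the $k$th equation by $g(c_k) \neq 0$ shows that, for any $K(t)$-solution $(P,Q)$, the polynomial $\Delta(x) := P(x)g(x) - Q(x)g'(x)$ vanishes at each $c_k$. Since $\deg \Delta \leq 2n - 1$ while $\Delta$ has $2n$ distinct zeros, $\Delta \equiv 0$. Squarefreeness of $g$ yields $\gcd(g, g') = 1$ in $K(t)[x]$, so $g \mid Q$; the bound $\deg Q \leq \deg g$ then forces $(P,Q) = \lambda\,(g', g)$ for some $\lambda \in K(t)$. So the $K(t)$-kernel---and, by the base-change remark, the $L_i$-kernel---is one-dimensional.

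Pick any non-zero $L_i$-solution $(P_0, Q_0)$; by the previous step $(P_0, Q_0) = \mu\,(g', g)$ for some $\mu \in K(t)^\times$. Since $g$ is monic, the leading coefficient of $Q_0$ is $\mu$, and $Q_0 \in L_i[x]$ forces $\mu \in L_i^\times$; hence $g = Q_0/\mu \in L_i[x]$, as claimed. The crux of the argument is the Pad\'e/$\Delta$-step: the choice of $2n$ points is exactly calibrated to make $\deg \Delta$ drop below the number of imposed zeros. After that, the base change is standard linear algebra, and passing from $g'/g \in L_i(x)$ back to $g \in L_i[x]$ succeeds only because separability of $\Phi_f$ (Remark~\ref{rem:sepp}) guarantees that $g$ is its own lowest-terms denominator for the logarithmic derivative.
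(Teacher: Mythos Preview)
The paper does not actually prove this statement; it is quoted from \cite{Szutkoski2016} without argument, so there is no in-paper proof to compare against. Your Pad\'e-interpolation strategy is sound, but there is a genuine slip in the dimension count that breaks the argument as written.

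You set up the interpolation system with $\deg P < n$ and $\deg Q \le n$ (hence ``$2n+1$ unknown coefficients''), yet at the crucial step you invoke ``the bound $\deg Q \le \deg g$''. These do not agree: you only imposed $\deg Q \le n$, and $d := \deg g$ can be strictly smaller than $n$ (it equals $n$ only when every $e_j = 1$). With the bound $\deg Q \le n$, once you know $g \mid Q$ you only get $Q = hg$ for some $h \in K(t)[x]$ with $\deg h \le n-d$, so the $K(t)$-kernel has dimension $n-d+1$, not $1$. The leading-coefficient extraction then fails, because a general $L_i$-solution is $(h_0 g', h_0 g)$ with $h_0$ a polynomial, and its leading coefficient no longer isolates a scalar that normalizes to $g$.

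The fix is painless: set up the system from the start with $\deg P < d$ and $\deg Q \le d$. Then $\Delta = Pg - Qg'$ has degree at most $2d-1 < 2n$, so the $2n$ distinct zeros still force $\Delta \equiv 0$; now $g \mid Q$ together with $\deg Q \le d = \deg g$ genuinely gives $(P,Q) = \lambda(g',g)$ and a one-dimensional kernel. Your base-change step and the extraction of $\mu \in L_i^\times$ from the leading coefficient of $Q_0$ then go through exactly as you wrote them.
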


Let us consider $e_1,\ldots, e_r$ as variables. To show that $\sum e_j h_{j,k}(t)\in L_i$ we need an expression of the form $a(t)/b(t)$, where $a,b\in K[t]$.  
Assume $h_{j,k}(t)=n_{j,k}(t)/d_{j,k}(t)$, where $n_{j,k}(t), d_{j,k}(t)\in K[t]$ are coprime. Hence \[ \sum_{j=1}^r e_j \frac{F_j'(c_k)}{F_j(c_k)}=\sum_{j=1}^r e_j h_{j,k}(t)=\sum_{j=1}^r e_j\frac{n_{j,k}(t)}{d_{j,k}(t)}.\]
 
 Furthermore, let $l_k(t)\in K[t]$ be the least common multiple of $d_{1,k}(t),\ldots, d_{r,k}(t)\in K[t]$. Hence \begin{equation}\label{eq:PL}
 \sum_{j=1}^r e_j h_{j,k}(t) = \sum_{j=1}^r e_j\frac{n_{j,k}(t)}{d_{j,k}(t)}=\frac{\sum_{j=1}^r e_j p_{j,k}(t)}{l_k(t)},
 \end{equation}
where $p_{j,k}(t):=l_k(t)\frac{n_{j,k}(t)}{d_{j,k}(t)}\in K[t].$ Hence, $\sum_{j=1}^r e_j h_{j,k}(t) \in L_i$ if, and only if (see Definition \ref{def:Li})
\begin{equation}\label{eq:eval2}
 \left[ \sum_{j=1}^r e_jp_{j,k}(x) - \frac{\sum_{j=1}^r e_j p_{j,k}(t)}{l_k(t)} l_k(x)\right] \bmod F_i =0,
 \end{equation}
where $a \bmod b$ is the remainder of division of $a$ by $b$. By manipulating Equation (\ref{eq:eval2}) we have
\begin{equation}\label{eq:eval3}
 \sum_{j=1}^r e_j \left[ \left(p_{j,k}(x)-h_{j,k}(t)l_k(x) \right) \bmod F_i \right] =0.
 \end{equation}
 
Hence, if $(e_1,\ldots, e_r)\in \{0,1\}^r$ is a solution of (\ref{eq:eval3}), for $k=1,\ldots, 2n$, then Theorem \ref{theo:Fi} tells us that $\prod_{j=1}^r F^{e_j}_j\in L_i[x]$. 

We will now explicitly present the system given by Equation (\ref{eq:eval3}). Let \[q_{j,k}(x):=p_{j,k}(x)-h_{j,k}(t)l_k(x)\in K(t)[x].\] Notice that $\deg_x(q_{j,k})\leq dn$, where $d=\deg_t(c_k)$. Furthermore, let \begin{equation}\label{eq:div}
r_{i,j,k}(x):=q_{j,k}(x) \bmod F_i \in K(t)[x].
\end{equation}
Let $m_j(t)\in K[t]$ be the monic lowest degree polynomial such that $m_j (t) r_{i,j,k}\in K[t][x]$ and let $l\in K[t]$ be the least common multiple of $m_1(t),\ldots, m_r(t)$. Hence
\[ l\sum_{j=1}^r e_j r_{i,j,k}=\sum_{j=1}^r e_j \hat{r}_{i,j,k} \in K[t][x], \]
where $\hat{r}_{i,j,k} = l\cdot r_{i,j,k}\in K[t][x]$. Notice that Equation (\ref{eq:eval3}) holds if and only if, $\sum_{j=1}^r e_j \hat{r}_{i,j,k}=0$. Next, let us write \[ \hat{r}_{i,j,k}= \sum_{d=0}^{d_i-1} \sum_{s=0}^S c_{j}(s,d,k) t^s x^d, \;\;\text{ where } c_{j}(s,d,k)\in K,  \]
where $d_i$ is the degree of $F_i$ and $S\geq 0$ is a bound for the $t$-exponents. Therefore, \[ \sum_{j=1}^r e_j \hat{r}_{i,j,k} = \sum_{d=0}^{d_i-1} \sum_{s=0}^S \left(\sum_{j=1}^r e_j c_{j}(s,d,k)\right) t^s x^d \]
and hence, the system in $e_1,\ldots, e_r$ from Equation (\ref{eq:eval3}) is given by
\begin{equation}\label{eq:system}
 \mathcal{S}_i:=\left\{ \begin{array}{l} 
 \displaystyle \sum_{j=1}^r e_j c_{j}(s,d,k) =0, \begin{array}{ll}  &  d=0,\ldots, d_i-1, \\
  & s=0,\ldots, S, \\
   &  k=1,\ldots, 2n. \end{array}
 
\end{array} 
 \right.
 \end{equation}

\begin{definition}\label{eq:rref}
A basis of solutions $ s_1,\ldots, s_d$ of a linear system with $r$ variables is called a \emph{$\{0,1\}$-echelon basis} if
\begin{enumerate}
\item $s_i=(s_{i,1}, \ldots, s_{i,r})\in \{0,1\}^r$, $1\leq i \leq d$, and
\item For each $j=1,\ldots, r$, there is a unique $i$, $1\leq i \leq d$ such that $s_{i,j}=1.$
\end{enumerate}
\end{definition}

For instance, $S=\{ (1,1,0,0), (0,0,1,0), (0,0,0,1) \}$ is a basis of solutions in $\{0,1\}$-echelon form. If a linear system admits a $\{0,1\}$-echelon basis then this basis coincides with the (unique) reduced echelon basis of this system.

\begin{definition}\label{def:pi}
Let $\mathcal{S}$ be a linear system with $\{0,1\}$-echelon basis $\{s_1,\ldots, s_d\}$. The \textit{partition defined by} this basis is the partition $P=\{ P^{(1)},\ldots, P^{(d)} \}$ where $P^{(j)}=\{i \; : \; s_{j,i}=1\}, \text{ for } j=1,\ldots, d .$
\end{definition}

For instance, $P_S=\{\{1,2\}, \{3\}, \{4\}\}$ is the partition defined by $S$ given above. Therefore, by computing the $\{0,1\}$-echelon basis of the system $\mathcal{S}_i$ given in (\ref{eq:system}) (notice that $\mathcal{S}_i$ admits such basis), the partition defined by this basis is the partition of $L_i$. This is summarized in the next algorithm.

\begin{algorithm}[H]\label{alg:partition}
\caption{\texttt{Partition-D} (Deterministic)}
\begin{algorithmic} 
\STATE \textbf{Input:} The irreducible factors $F_1,\ldots, F_r$ of $\Phi_f(x)$ over $K(t)$ and an index $1\leq i\leq r$.
\STATE \textbf{Output:} The partition $P_i$ of $L_{i}$.
\vspace{0.1cm}
\STATE \text{1. Compute the system $\mathcal{S}_i$ as in (\ref{eq:system}).}
\STATE \text{2. Compute the $\{0,1\}$-echelon basis of $\mathcal{S}_i$. }
\STATE \text{3. Let $P_i$ be the partition defined by this basis.}
\STATE \text{4. \textbf{return} $P_i$.}
\end{algorithmic}
\end{algorithm}

However, algorithm \texttt{Partition-D} is not efficient in practice due to the (costly) $2nr$ polynomial divisions in $K(t)[x]$. We shall present a probabilistic version of this algorithm in Section \ref{sec:4.3}, which allows us to compute $P_i$ much faster.

\subsection{Valuation rings of $K(t)/K$}
In this section we briefly recall the definition and some properties of valuation rings of a rational function field. We will use valuation rings to simplify and speed up the computation of the partition $P_i$ of $L_i$. The results presented in this subsection can be found in \cite{AlgFun}.
\begin{definition}
A \textit{valuation ring} of $K(t)/K$ is a ring $\mathcal{O}\subseteq K(t)$ with the following properties:
\begin{enumerate}
\item $K\subsetneq \mathcal{O} \subsetneq K(t)$, and
\item for every $g\in K(t)$ we have $g\in \mathcal{O}$ or $1/g\in \mathcal{O}$.
\end{enumerate}
\end{definition}

Valuation rings are \textit{local rings}, that is, if $\mathcal{O}$ is a valuation ring, then there exists a unique maximal ideal $\mathcal{P}\subseteq \mathcal{O}$.
\begin{lemma}
Let $p\in K[x]$ be an irreducible polynomial. Let \[ \mathcal{O}_{p}:=\left\lbrace \frac{g_n(t)}{g_d(t)}\in K(t)\; : \; p(x)\nmid g_d(x) \right\rbrace\text{ and } \]
\[ \mathcal{P}_{p}:=\left\lbrace \frac{g_n(t)}{g_n(t)}\in K(t)\; :\; p(x)\nmid g_d(x)\text{ and }p(x) \mid g_n(x) \right\rbrace. \] Then $\mathcal{O}_p$ is a valuation ring with maximal ideal $\mathcal{P}_p$.
\end{lemma}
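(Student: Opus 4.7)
The plan is to verify the two valuation-ring axioms directly and then identify the maximal ideal, making essential use of the fact that $K[t]$ is a unique factorization domain so that divisibility by the irreducible $p$ is well-behaved.

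First, I would show that $\mathcal{O}_p$ is a ring. Given $g_1, g_2 \in \mathcal{O}_p$, choose representations $g_i = a_i/b_i$ with $p \nmid b_i$. Then
\[
g_1 \pm g_2 = \frac{a_1 b_2 \pm a_2 b_1}{b_1 b_2}, \qquad g_1 g_2 = \frac{a_1 a_2}{b_1 b_2},
\]
and since $p$ is irreducible and $K[t]$ is a UFD, $p \nmid b_1 b_2$, so these representations witness membership in $\mathcal{O}_p$. Containment $K \subseteq \mathcal{O}_p$ is immediate (denominator $1$). To see $\mathcal{O}_p \subsetneq K(t)$, observe that $1/p \notin \mathcal{O}_p$: if $1/p = a/b$ then $ap = b$, so $p \mid b$ for every representation.

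Next, I would establish the dichotomy. Given $g \in K(t)^\times$, write $g = a/b$ in lowest terms so that $\gcd(a,b) = 1$. If $p \nmid b$, then $g \in \mathcal{O}_p$; otherwise $p \mid b$, and then $p \nmid a$ (by coprimality), so the representation $1/g = b/a$ shows $1/g \in \mathcal{O}_p$. This is exactly the second axiom, so $\mathcal{O}_p$ is a valuation ring.

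Finally, for the maximal-ideal statement, I would characterize the units of $\mathcal{O}_p$ and show they are precisely the complement of $\mathcal{P}_p$. Writing $g = a/b$ in lowest terms with $p \nmid b$, the element $g$ is a unit in $\mathcal{O}_p$ iff $1/g = b/a \in \mathcal{O}_p$, iff $p \nmid a$. Hence the non-units are exactly those $g$ admitting a representation $a/b$ with $p \nmid b$ and $p \mid a$, which is the defining condition for $\mathcal{P}_p$. Since in any ring the set of non-units in a local ring forms its unique maximal ideal, this shows $\mathcal{P}_p$ is an ideal and is the maximal ideal of $\mathcal{O}_p$ (closure under addition and absorption by $\mathcal{O}_p$ can also be checked directly using the same denominator-merging computation as above). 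The only subtlety worth being careful about is that membership in $\mathcal{O}_p$ or $\mathcal{P}_p$ is a property of the rational function, not of a particular representation; working consistently with reduced fractions eliminates this as an obstacle, and is in fact the only technical point in an otherwise routine verification.
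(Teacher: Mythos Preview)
Your argument is correct and is the standard textbook verification. Note, however, that the paper does not actually supply its own proof of this lemma: the subsection opens by stating that ``the results presented in this subsection can be found in [AlgFun]'' and the lemma is simply stated without proof. So there is nothing in the paper to compare against beyond a citation to Stichtenoth's book, where the same direct verification appears.

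One small point of presentation: your sentence ``Since in any ring the set of non-units in a local ring forms its unique maximal ideal, this shows $\mathcal{P}_p$ is an ideal'' is circular as written---you have not yet established that $\mathcal{O}_p$ is local. The clean statement you want is the converse: if the set of non-units of a commutative ring is closed under addition (hence is an ideal), then the ring is local and that set is its maximal ideal. Your parenthetical remark already acknowledges that closure under addition and absorption must be checked directly, so the logic is fine; just reorder the exposition so that you first verify $\mathcal{P}_p$ is an ideal and then conclude locality. Also, for completeness you should note $K\subsetneq\mathcal{O}_p$ strictly (e.g.\ $t\in\mathcal{O}_p\setminus K$), though this is trivial.
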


Furthermore, every valuation ring $\mathcal{O}$ of $K(t)/K$ is of the form $\mathcal{O}_p$, for some irreducible polynomial $p(x)\in K[x]$, or is the place at infinity of $K(t)/K$, that is, $\mathcal{O}=\{\frac{g_n(t)}{g_d(t)}\in K(t)\; : \; \deg(g_n(x))\leq \deg(g_d(x))\}. $

\begin{lemma}\label{lemma:quo}
Let $\mathcal{O}_{p}$ be a valuation ring of $K(t)/K$, where $p\in K[x]$ is an irreducible polynomial, and let $\mathcal{P}_p$ be its maximal ideal. Let $\textbf{F}_p$ be the residue class field $\mathcal{O}_p/\mathcal{P}_p$. Then $\textbf{F}_p \cong K[x]/\left\langle p(x)\right\rangle.$
\end{lemma}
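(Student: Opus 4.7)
The plan is to construct an explicit ring homomorphism $\varphi\colon\mathcal{O}_p\to K[x]/\langle p(x)\rangle$ whose kernel is $\mathcal{P}_p$ and which is surjective, and then apply the first isomorphism theorem.

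First I would define $\varphi$ as follows. Given $g=g_n(t)/g_d(t)\in\mathcal{O}_p$ with $g_n,g_d\in K[x]$ coprime (so that by the definition of $\mathcal{O}_p$ we have $p(x)\nmid g_d(x)$), the residue $\overline{g_d(x)}$ is a nonzero element of the field $K[x]/\langle p(x)\rangle$ and so admits an inverse there. I would then set
\[
\varphi(g) \;:=\; \overline{g_n(x)}\cdot \overline{g_d(x)}^{\,-1} \;\in\; K[x]/\langle p(x)\rangle .
\]
The first step is to verify well-definedness: if $g_n/g_d=h_n/h_d$ in $K(t)$, then $g_n h_d = h_n g_d$ as polynomials, so the same identity holds after reduction modulo $p$, and dividing by $\overline{g_d}\cdot\overline{h_d}$ (both units in the residue field) shows the two definitions agree. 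A routine check then shows $\varphi$ respects addition and multiplication, using that inverses in $K[x]/\langle p\rangle$ are multiplicative.

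Next I would compute the kernel. An element $g_n/g_d\in\mathcal{O}_p$ (with $g_n,g_d$ coprime) lies in $\ker\varphi$ iff $\overline{g_n(x)}=0$ in $K[x]/\langle p(x)\rangle$, i.e.\ iff $p(x)\mid g_n(x)$. Combined with $p(x)\nmid g_d(x)$, this is exactly the defining condition of $\mathcal{P}_p$. For surjectivity, I would observe that every residue class is represented by some $q(x)\in K[x]$, and then $q(t)/1\in\mathcal{O}_p$ maps to $\overline{q(x)}$. Applying the first isomorphism theorem yields $\mathcal{O}_p/\mathcal{P}_p\cong K[x]/\langle p(x)\rangle$, which is the desired conclusion.

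The only subtlety I anticipate is making the well-definedness argument watertight when $g$ is represented by a non-reduced fraction: one needs to argue that regardless of which coprime (or even non-coprime) pair $(g_n,g_d)$ represents $g$, the condition ``$p\nmid g_d$'' together with the definition of $\varphi$ produces the same element of $K[x]/\langle p\rangle$. This is immediate for the coprime representative (which is unique up to scalar) and then propagates to arbitrary representatives by multiplicativity. The rest of the argument is essentially bookkeeping.
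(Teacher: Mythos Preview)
Your argument is correct and is the standard proof of this fact: the evaluation-at-a-root map $K[t]\to K[x]/\langle p\rangle$ extends to the localization $\mathcal{O}_p$ exactly as you describe, and the first isomorphism theorem finishes it. Note that the paper does not actually supply a proof of this lemma; it is stated as background material with a citation to \cite{AlgFun} (Stichtenoth's \emph{Algebraic Function Fields and Codes}), where the result appears with essentially the same argument you give.
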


\subsection{A Las Vegas Type Algorithm}
\label{sec:4.3}

In this section we  present a probabilistic version of Algorithm \newline\texttt{Partition-D}. We begin by noticing, as in \cite{Szutkoski2016}, that fewer points are enough to find the partition $P_i$ (usually much less than $2n$). 
Furthermore, the equations of the system $\mathcal{S}_i$ come from the computation of $r_{i,j,k}\in K(t)[x]$ in (\ref{eq:div}), which involves a polynomial division over $K(t)$. Let us define a \textit{good ideal} $\mathcal{P}_p$:

\begin{definition}\label{def:GoodIdeal}
Let $f\in K(t)$ and let $F_1,\ldots, F_r$ be the monic irreducible factors of $\Phi_f$ over $K(t)$. Let $\mathcal{O}_p\subset K(t)$ be a valuation ring with maximal ideal $\mathcal{P}_p$, where $p=p(x)\in K[x]$ is irreducible. Let $\textbf{F}_p$ be its residue field. We say that $\mathcal{P}_p$ is a \textit{good} $K(t)$-\textit{ideal} (with respect to $f$) if
\begin{enumerate} 
\item[1)] $F_i\in \mathcal{O}_{p}[x]$, $i=1,\ldots, r$.
\item[2)] The image of $f$ in $\textbf{F}_p$ is not zero.
\item[3)] The image of $\Phi_f(x)$ in $\textbf{F}_p[x]$ is separable.
\end{enumerate} 
\end{definition}

To avoid the expensive computations of $r_{i,j,k}\in K(t)[x]$, we only compute their image modulo a good $K(t)$-ideal $\mathcal{P}_p$ ( i.e., by mapping $t\rightarrow \alpha$, where $\alpha$ is a root of $p(x)$). These reductions will simplify our computations and we will still be able to construct a system $\tilde{S}_i$ which is likely to give us the partition $P_i$.

\begin{remark}\label{rem:polyp}
Condition $1)$ in Definition \ref{def:GoodIdeal} is equivalent to $p(x) \nmid f_d(x)$ (recall Remark \ref{rem:fact}) and condition $2)$ is equivalent to $p(x)\nmid f_n(x)$. The image of $\Phi_f$ in $\textbf{F}_p[x]$ is separable if $p(t)$ does not divide $R:=resultant(\nabla_f, \nabla_f',x)\in K[t]$. The degree of $R$ is bounded by $(2n-1)n$. Instead of mapping $t \rightarrow \alpha$, we could map $t$ to any element in $\textbf{F}_p=K[x]/\left\langle p(x) \right\rangle$. Hence, if $\text{size}(K)^{d_p} > (2n-1)n$, where $d_p=\deg(p(x))$, then we are guaranteed to find a good evaluation point in $\textbf{F}_p$ which satisfies the conditions in Definition \ref{def:GoodIdeal}. Hence, $d_p\in \mathcal{O}(\log n)$. For best performance, we look for $p(x)$ of smallest degree possible and use the mapping $t\rightarrow \alpha$. Notice that if $\text{char}(K)=0$, we can always choose $p(x)$ linear.
\end{remark}

\subsubsection{Simplified System}

Let $\mathcal{P}_p$ be a good $K(t)$-ideal, where $p=p(x)\in K[x]$ is irreducible. Let $\mathcal{O}_p$ be its valuation ring and $\textbf{F}_p$ be its residue class field. Let $c\in K(f)$ be such that \[h_{j,c}(t):=F'_j(c)/F_j(c)\in \mathcal{O}_{p}\subseteq K(t),\] for $j=1,\ldots, r$, and let $p_{j,c}(t), l_c(t) \in K[t]$ be as in Equation (\ref{eq:PL}). Let $\tilde{F}_i$ be the image of $F_i$ in $\textbf{F}_p[x]$ and let $\tilde{h}_{j,c}$ be the image of $h_{j,c}$ in $\textbf{F}_p$. Let \begin{equation}\label{eq_qtilde}
\tilde{q}_{j,c}(x):=p_{j,c}(x)-\tilde{h}_{j,c} \;l_c(x)\in \textbf{F}_p[x]
\end{equation} and let $\tilde{r}_{i,j,c}:=\tilde{q}_{j,c}(x) \bmod \tilde{F}_i\in \textbf{F}_p[x]$.
Let $d_p$ be the degree of $p(x)\in K[x]$ and let $\alpha$ be one of its roots. By Lemma \ref{lemma:quo} we have $\textbf{F}_p\cong K[\alpha]$ and hence \begin{equation}\label{eq:r_tilde}
 \tilde{r}_{i,j,c} = \sum_{d=0}^{d_i-1} \sum_{s=0}^{d_p-1} C_j(s,d) \alpha^s x^d, \text{ where } C_j(s,d) \in K. 
 \end{equation}
Consider the system $\tilde{\mathcal{S}}_{i,c}$ given by
\begin{equation}\label{eq:redsystem}
 \tilde{\mathcal{S}}_{i,c}:=\left\{ \begin{array}{ll} 
   \displaystyle \sum_{j=1}^r e_j C_{j}(s,d) =0, &  \begin{array}{l} d=0,\ldots, d_i-1, \\
      s=0,\ldots, d_p-1.  \end{array}
\end{array} 
 \right.
 \end{equation}

\noindent where $C_j(s,d)\in K$ is as in Equation \ref{eq:r_tilde}. If $(e_1,\ldots, e_r)\in \{0,1\}^r$ is a solution of $\mathcal{S}_i$, then $(e_1,\ldots, e_r)$ must also satisfy the system $\tilde{\mathcal{S}}_{i,c}$. The converse, however, need not be true. A basis of solutions of $\tilde{\mathcal{S}}_{i,c}$ is not necessarily a basis of solutions of $\mathcal{S}_i$. In fact, a basis of solutions of $\tilde{\mathcal{S}}_{i,c}$ might not even be a $\{0,1\}$-echelon basis. If this happens we need to consider more equations by taking $c' \in K(f)$ such that $h_{j,c'}(t)\in \mathcal{O}_p$, for $j=1,\ldots, r$, and solving $\tilde{S}_i:=\tilde{\mathcal{S}}_{i,c} \cup \tilde{\mathcal{S}}_{i,c'}$, and so on. In subsection \ref{sec:4.3.2} we give a halting condition that tells us when to stop adding more equations to the system $\tilde{S}_i$.

\begin{remark}\label{rem:1}
Advantages of considering $\tilde{\mathcal{S}}_{i}$ over $\mathcal{S}_i$:
\begin{enumerate}
\item Smaller number of polynomial divisions to define $\tilde{\mathcal{S}}_i$.
\item The polynomial divisions are over $K[x]/\left\langle p(x) \right\rangle$, where $p(x)\in K[x]$ is the polynomial defining the ideal $\mathcal{P}$.
\item Smaller system: $\tilde{\mathcal{S}}_i$ has at most $d d_i d_p$ equations, where $d$ is the number of $c$'s used to construct $\tilde{S}_i$, while $\mathcal{S}_i$ has at most $2n d_i S$ equations in $r\leq n$ variables.
\end{enumerate}
\end{remark}

Although in practice we need very few elements $c\in K(f)$ to find $P_i$ (see Table \ref{table}), we were not able to show that $2n$ elements are sufficient to compute $P_i$.

\subsubsection{Halting Condition}\label{sec:4.3.2}

Let $\tilde{S}_i=\cup \tilde{S}_{i,c}$ be a system constructed from several $c\in K(f)$, where $\tilde{S}_{i,c}$ is as in (\ref{eq:redsystem}). 
We will give a halting condition that tells us when to stop adding more equations.
If $\tilde{S}_i$ does not have a $\{0,1\}$-echelon basis then we clearly need more equations. Now let us suppose that $\tilde{S}_i$ has a $\{0,1\}$-echelon basis. Then the partition $\tilde{P}_i$ corresponding to this basis (see Definition \ref{def:pi}) might still be a proper refinement of $P_i$ (the correct partition). To show that $\tilde{P}_i = P_i$ it suffices to show that the $\tilde{P}_i$-products are polynomials in $L_i[x]$.  To do so, we use the following lemma.

  \begin{lemma}[\cite{Szutkoski2016}, Lemma 37]\label{lemma:37}
Let $K$ be a field and $F\in K[x]$ monic and separable. Let $\mathcal{O}\subseteq K$ be a ring such that $ F = g_1 \cdots g_s = h_1\cdots h_s, $ where $g_j, h_j\in \mathcal{O}[x]$ are monic (not necessarily irreducible). Let $\mathcal{P}\subseteq \mathcal{O}$ be a maximal ideal such that the image of $F$ over the residue class field is separable. If $g_j \equiv h_j \bmod \mathcal{P}$, $1\leq j \leq s$, then $g_j = h_j$, $1 \leq j \leq s$.
\end{lemma}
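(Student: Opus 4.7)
The plan is to reduce to the two-factor case by induction on $s$, and then prove that case by contradiction, lifting a hypothetical common root of the two factorizations out of the residue field to deduce that $\bar{F}$ has a multiple root. For the induction step, I apply the $s=2$ version to the bracketed factorizations $F = g_1 \cdot (g_2 \cdots g_s) = h_1 \cdot (h_2 \cdots h_s)$: the tail products are monic in $\mathcal{O}[x]$ and congruent modulo $\mathcal{P}$ because each $g_i \equiv h_i \pmod{\mathcal{P}}$, so the $s=2$ case yields $g_1 = h_1$; then $g_2 \cdots g_s = h_2 \cdots h_s$ is a separable factorization of $F/g_1$ (a factor of the separable $\bar{F}$) and the induction closes.

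For the base case $s=2$, write $F = gh = g'h'$ with $\bar{g} = \bar{g}'$ and $\bar{h} = \bar{h}'$, noting that $\deg g = \deg g'$ because reduction preserves the degree of a monic polynomial. The key subclaim I would establish is $\gcd(g, h') = 1$ in $K[x]$. Granting this, from $g \mid F = g'h'$ together with the coprimality one obtains $g \mid g'$ in $K[x]$, and then equality of degrees forces $g = g'$, whence $h = h'$ by cancelling the monic factor $g$.

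To prove $\gcd(g, h') = 1$, suppose for contradiction that $g$ and $h'$ share a root $\alpha \in \bar{K}$. Then $\alpha$ is a root of the monic $F \in \mathcal{O}[x]$, hence integral over $\mathcal{O}$, and the subring $\mathcal{O}[\alpha] \subseteq \bar{K}$ is an integral extension of $\mathcal{O}$. By lying-over there is a maximal ideal $\mathcal{Q} \subseteq \mathcal{O}[\alpha]$ with $\mathcal{Q} \cap \mathcal{O} = \mathcal{P}$. Reducing modulo $\mathcal{Q}$, the image $\bar{\alpha}$ is a root of $\bar{g}$ (from $g(\alpha)=0$) and also of $\bar{h}$ (since $h'(\alpha)=0$ forces $\bar{h}'(\bar{\alpha})=0$ and $\bar{h} = \bar{h}'$). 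Hence $\bar{\alpha}$ is a common root of $\bar{g}$ and $\bar{h}$, making it a multiple root of $\bar{F} = \bar{g}\bar{h}$ in the residue field extension, contradicting the separability of $\bar{F}$.

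The only technical step is the appeal to lying-over to extend $\mathcal{P}$ to a maximal ideal of $\mathcal{O}[\alpha]$; this is where the hypothesis that $F$ is monic in $\mathcal{O}[x]$ is essential, because it is precisely what makes $\alpha$ integral over $\mathcal{O}$. Everything else is bookkeeping: $\mathcal{O}$ is a domain as a subring of the field $K$, and monic polynomial division keeps quotients and remainders in $\mathcal{O}[x]$, so all the cofactors and reductions really do live in the rings where I claim.
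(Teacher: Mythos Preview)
The paper does not give its own proof of this lemma; it is quoted verbatim from \cite{Szutkoski2016} and used as a black box in the proof of Theorem~\ref{theo:correctPi}. So there is nothing in the paper to compare your argument against.

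That said, your proof is correct and self-contained. The induction reducing to $s=2$ is routine, and the heart of the matter---showing $\gcd(g,h')=1$ in $K[x]$---is handled exactly right: a common root $\alpha$ is integral over $\mathcal{O}$ because $F$ is monic, lying-over produces a prime $\mathcal{Q}$ of $\mathcal{O}[\alpha]$ above $\mathcal{P}$, and since integral extensions preserve maximality the quotient $\mathcal{O}[\alpha]/\mathcal{Q}$ is a field extension of $\mathcal{O}/\mathcal{P}$ in which $\bar\alpha$ becomes a repeated root of $\bar F$, contradicting separability. Once $\gcd(g,h')=1$, the divisibility $g\mid g'h'$ and the degree match finish the base case. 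The only hypothesis you really lean on is that $\mathcal{O}$ is a domain (automatic, as a subring of $K$) so that the standard commutative-algebra package applies; everything else is bookkeeping, as you say.
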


In order to apply this lemma, consider the following map \[\begin{array}{rcc} \Psi_i: K(t) & \rightarrow & K(t,x) \\ 
g(t) & \mapsto & \frac{g_n(x)\bmod F_i}{g_d(x)\bmod F_i}. 
\end{array}
 \] 
Hence, $g(t)\in L_i$ if, and only if, $\Psi_i(g)=g$ (see Definition \ref{def:Li}) and therefore, we can rewrite $L_i=\{ g(t)\in K(t)\; : \; \Psi_i(g(t))=g(t) \}.$
 
 \begin{theorem}\label{theo:correctPi}
 Let $P_i$ be the partition of $L_i$ and let $\tilde{P}_i$ be a refinement of $P_i$. Let $\mathcal{P}_p$ be a good $K(t)$-ideal. If $\tilde{g}_1,\ldots, \tilde{g}_s\in K(t)[x]$ are the $\tilde{P}_i$-products and if $ \Psi_i(\tilde{g}_j)\equiv \tilde{g}_j \bmod \mathcal{P}_p,\; j=1,\ldots, s,  $
 where $\Psi_i$ acts on $\tilde{g}_j$ coefficient-wise, then $\tilde{P}_i = P_i$.
 \end{theorem}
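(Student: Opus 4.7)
The plan is to prove $\tilde{P}_i = P_i$ by showing that every $\tilde{P}_i$-product $\tilde{g}_j$ lies in $L_i[x]$. By Theorem \ref{theo:main2}, $P_i$ is characterized as the unique finest partition whose products lie in $L_i[x]$. Since $\tilde{P}_i$ is given to refine $P_i$, if the $\tilde{P}_i$-products lie in $L_i[x]$ then $\tilde{P}_i$ is a candidate for this finest partition, and maximality of the number of parts forces $\tilde{P}_i = P_i$. The membership $\tilde{g}_j \in L_i[x]$ is equivalent, via the characterization $L_i = \{g\in K(t) : \Psi_i(g)=g\}$, to the equality $\Psi_i(\tilde{g}_j)=\tilde{g}_j$ (with $\Psi_i$ acting coefficient-wise). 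So the target equality to produce is $\Psi_i(\tilde{g}_j)=\tilde{g}_j$, and the hypothesis only gives this modulo $\mathcal{P}_p$; the gap will be closed by Lemma \ref{lemma:37}.

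The main tool will be Lemma \ref{lemma:37}, applied one $P_i$-product at a time. Because $\tilde{P}_i$ refines $P_i$, the index set $\{1,\ldots,s\}$ splits into groups $I_1,\ldots, I_t$ such that every $P_i$-product $g_l$ factors as $g_l = \prod_{k\in I_l}\tilde{g}_k$. Since $g_l \in L_i[x]$ by the definition of $P_i$, $\Psi_i(g_l)=g_l$; and because $\Psi_i$ is a ring homomorphism on $K(t)$ extended coefficient-wise to polynomials, applying $\Psi_i$ to the refinement identity yields the twin factorization $g_l = \prod_{k\in I_l}\Psi_i(\tilde{g}_k)$. Thus each $g_l$ carries two monic factorizations with the same number of factors, which the hypothesis $\Psi_i(\tilde{g}_j)\equiv \tilde{g}_j\bmod \mathcal{P}_p$ pairs up modulo $\mathcal{P}_p$.

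To invoke Lemma \ref{lemma:37} with $F=g_l$, $\mathcal{O}=\mathcal{O}_p$, and $\mathcal{P}=\mathcal{P}_p$, I would verify the following: (i) $g_l$ is monic and separable, the latter because $g_l \mid \Phi_f$ and $\Phi_f$ is separable; (ii) each $\tilde{g}_k$ lies in $\mathcal{O}_p[x]$ because it is a product of the $F_j$'s and $F_j\in \mathcal{O}_p[x]$ by condition (1) of Definition \ref{def:GoodIdeal}; (iii) each $\Psi_i(\tilde{g}_k)$ also lies in $\mathcal{O}_p[x]$, which is precisely what the congruence hypothesis $\Psi_i(\tilde{g}_k)\equiv \tilde{g}_k \bmod \mathcal{P}_p$ requires once both sides are viewed inside the common ring built from $\mathcal{O}_p$ and $K(t)[z]/F_i(z)$; and (iv) the image of $g_l$ in $\textbf{F}_p[x]$ is separable, which follows from the separability of the image of $\Phi_f$ guaranteed by condition (3) of a good ideal together with $g_l\mid \Phi_f$. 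Lemma \ref{lemma:37} then yields $\tilde{g}_k = \Psi_i(\tilde{g}_k)$ for each $k\in I_l$, hence $\tilde{g}_k \in L_i[x]$; doing this for every $l$ covers all $k$ and gives the claim.

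The part I expect to be the fiddliest is the bookkeeping of where $\Psi_i(\tilde{g}_k)$ lives: coefficient-wise, $\Psi_i$ carries $K(t)$ into $K(t)[z]/F_i(z)$, so a priori $\Psi_i(\tilde{g}_k)$ sits in a larger ring than $\tilde{g}_k$ itself. One must interpret the hypothesis so that the comparison $\Psi_i(\tilde{g}_k)\equiv \tilde{g}_k \bmod \mathcal{P}_p$ is meaningful and so that both factorizations of $g_l$ can genuinely be regarded as living in $\mathcal{O}_p[x]$ when Lemma \ref{lemma:37} is applied. Once that identification is in place, the rest is a direct, per-$g_l$ application of the lemma.
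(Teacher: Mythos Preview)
Your approach is correct and uses the same key tool (Lemma~\ref{lemma:37}) as the paper, with one structural difference worth noting. You apply the lemma once per $P_i$-product $g_l$, using $\Psi_i(g_l)=g_l$ because $g_l\in L_i[x]$. The paper instead applies the lemma a single time with $F=\Phi_f$: since $\Phi_f\in K(f)[x]\subseteq L_i[x]$, one has $\Psi_i(\Phi_f)=\Phi_f$, and hence
\[
\tilde{g}_1\cdots\tilde{g}_s=\Phi_f=\Psi_i(\Phi_f)=\Psi_i(\tilde{g}_1)\cdots\Psi_i(\tilde{g}_s),
\]
so Lemma~\ref{lemma:37} yields $\Psi_i(\tilde{g}_j)=\tilde{g}_j$ for all $j$ at once. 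This bypasses any reference to the intermediate $P_i$-products in the body of the argument, which is a bit cleaner; your version is correct but does a little more bookkeeping than necessary. The concern you raise about where $\Psi_i(\tilde{g}_k)$ lives is legitimate and is glossed over in the paper as well; it applies equally to both versions.
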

 \begin{proof}
 
Since $\tilde{P}_i$ is a refinement of $P_i$, it suffices to show that the $\tilde{P}_i$-products $\tilde{g}_1\ldots, \tilde{g}_s$ are  polynomials in $L_i[x]$. That is, we have to show that $\Psi_i(\tilde{g}_j)=\tilde{g}_j$, for $j=1,\ldots, s$. Since
\[ \tilde{g}_1\cdots \tilde{g}_s = \Phi_f(x)=\Psi_i(\Phi_f(x))=\Psi_i(\tilde{g}_1)\cdots\Psi_i(\tilde{g}_s) \]
and  $\Psi_i(\tilde{g}_j)=\tilde{g}_j\bmod \mathcal{P}_p$, for $1\leq j \leq s$, then Lemma \ref{lemma:37} implies that $\Psi_i(\tilde{g}_j)=\tilde{g}_j$. Thus $\tilde{g}_j\in L_i[x]$, for $j=1, \ldots, s$, and $\tilde{P}_i=P_i$.
 \end{proof}

This gives us a procedure to determine if the solutions of a system give the partition $P_i$ of the principal subfield $L_i$.

\begin{algorithm}[H]\label{alg:check}
\caption{\texttt{Check}}
\begin{algorithmic} 
\STATE \textbf{Input:} \;\;\;A linear system $\mathcal{S}$ in $e_1,\ldots, e_r$ and an index $i$.
\STATE \textbf{Output:} The partition $P_i$ of $L_i$ or \textit{false}.
\vspace{0.1cm}
\STATE \text{1. Compute a basis of solutions of $\mathcal{S}$.}
\STATE \text{2. \textbf{if} this basis is not a $\{0,1\}$-echelon basis \textbf{then} }
\STATE \text{3.\;\;\;\; \textbf{return} false \;\textit{*Need more equations}.}
\STATE \text{4. Let $\tilde{P}_i$ be the partition defined by this basis.}
\STATE \text{5. Let $\tilde{F}_i$ be the image of $F_i$ in $\textbf{F}_p[x]$.}
\STATE \text{6. Let $\tilde{g}_1,\ldots, \tilde{g}_d$ be the $\tilde{P}_i$-products.}
\STATE \text{7. \textbf{for} every coefficient $c=\frac{c_n(t)}{c_d(t)}\in K(t)$ of $\tilde{g}_1, \ldots, \tilde{g}_d$ \textbf{do}}
\STATE \text{8. \;\;\;\; Let $\tilde{c}$ be the image of $c$ in $\textbf{F}_p$.}
\STATE \text{9. \;\;\;\; \textbf{if} $c_n(x) \bmod \tilde{F}_i \neq \tilde{c}\cdot ( c_d(x)\bmod \tilde{F}_i$) \textbf{then}}
\STATE \text{10. \;\;\;\;\;\;\;\; \textbf{return} false\; \textit{*Need more equations}.}
\STATE \text{11. \textbf{return} $\tilde{P}_i$}
\end{algorithmic}
\end{algorithm}

The correctness of the algorithm follows from Theorem \ref{theo:correctPi}. We end this section by computing the complexity of Algorithm  \rm \texttt{Check}.
\begin{theorem}
One call of Algorithm \normalfont \texttt{Check} \it can be performed with $ \mathcal{O}(n_e r^{\omega-1}+M(n^2)+nM(n)M(d_p))$ field operations, where $d_p$ is the degree of the polynomial defining $\mathcal{P}_p$, $n_e$ is the number of equations in $\mathcal{S}$ and $\omega$ is a feasible matrix multiplication exponent.
\end{theorem}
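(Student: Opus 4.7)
The plan is to walk through Algorithm~\texttt{Check} line by line, bounding the cost of each step and showing that the three terms in the claimed bound absorb everything. By the complexity model, computing a basis of solutions of $\mathcal{S}$ in Step~1 costs $\mathcal{O}(n_e r^{\omega-1})$ field operations, and testing whether this basis is in $\{0,1\}$-echelon form together with reading off $\tilde{P}_i$ (Steps~2--4) requires at most $\mathcal{O}(r^2)$ operations, which is absorbed by the linear-algebra term.

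For Steps~5--6, I would appeal to Remark~\ref{rem:fact} to write each $F_j$ as $G_j/m_j$ with $G_j \in K[x,t]$ irreducible, and observe that the numerator of $\Phi_f$ is (up to $f_d$) the bivariate polynomial $\nabla_f$ of size $\mathcal{O}(n^2)$: indeed $\sum_j \deg_x(G_j)=n$ and $\sum_j \deg_t(G_j)\le n$ by Remark~\ref{rem:fact}. Assembling all $\tilde{P}_i$-products $\tilde g_1,\ldots,\tilde g_d$ can therefore be done with a single product tree whose leaves are the $G_j$; using Kronecker substitution to reduce bivariate multiplication to univariate, and invoking super-additivity of $M$, this costs $\mathcal{O}(M(n^2))$ field operations in total. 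Computing $\tilde{F}_i$ by reducing the at most $d_i+1$ coefficients of $F_i$ modulo $p(t)$ costs $\mathcal{O}(nM(n))$ operations, which is absorbed in the same $M(n^2)$ term.

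For the verification loop in Steps~7--10 the key observation is that $\sum_j \deg_x(\tilde g_j)=n$, so there are only $\mathcal{O}(n)$ coefficients $c=c_n(t)/c_d(t)\in K(t)$ to check, and each $c_n,c_d$ has $t$-degree $\mathcal{O}(n)$ (inherited from the bivariate size of $\nabla_f$). For one such $c$: computing $\tilde c \in \textbf{F}_p$ reduces to dividing $c_n(t)$ and $c_d(t)$ by $p(t)$, costing $\mathcal{O}(M(n))$ field operations; and computing $c_n(x) \bmod \tilde{F}_i$ and $c_d(x) \bmod \tilde{F}_i$ amounts to two polynomial divisions in $\textbf{F}_p[x]$ with operands of degree $\mathcal{O}(n)$, which by the complexity model costs $\mathcal{O}(M(n))$ arithmetic operations in $\textbf{F}_p$. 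Since $\textbf{F}_p \cong K[x]/\langle p(x)\rangle$ has arithmetic cost $\mathcal{O}(M(d_p))$ per operation, each coefficient contributes $\mathcal{O}(M(n)M(d_p))$, and summing over the $\mathcal{O}(n)$ coefficients gives the term $\mathcal{O}(nM(n)M(d_p))$. Adding the three contributions yields the claimed bound.

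The step I expect to be the main obstacle is the product-tree accounting in the second paragraph: one has to organize the multiplications so each pair of intermediate products is multiplied only once, and then invoke super-additivity of $M$ in the bivariate setting (via Kronecker substitution) to collapse the total cost to a single $\mathcal{O}(M(n^2))$ budget rather than a per-product cost. Once that accounting is in place, the remaining terms fall out of the standard univariate arithmetic bounds from the complexity model.
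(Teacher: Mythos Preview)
Your proposal is correct and follows essentially the same approach as the paper's own proof: linear algebra for Step~1, super-additivity of $M$ applied to the bivariate products for Step~6, and $\mathcal{O}(n)$ coefficient checks each costing $\mathcal{O}(M(n)M(d_p))$ for Steps~7--10. The paper is terser---it simply says the $\tilde g_j$ are obtained with $r-d$ bivariate multiplications and invokes super-additivity to get $\mathcal{O}(M(n^2))$---whereas you spell out the Kronecker substitution and product-tree organization explicitly; but the underlying accounting is the same.
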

\begin{proof}
A basis of solutions of $\mathcal{S}$ is computed with $\mathcal{O}(n_e r^{\omega-1})$ field operations. If this basis is not a $\{0,1\}$-echelon basis, then the algorithm returns \textit{false}. The computation of the polynomials $\tilde{g}_1,\ldots, \tilde{g}_d$ in Step 6 can be done with $r-d$ bivariate polynomial multiplications. By Remark \ref{rem:fact}, $\sum \deg_t G_i = \sum \deg_x G_i =n$ and hence, we can compute $\tilde{g}_1,\ldots, \tilde{g}_d$ with $\mathcal{O}(M(n^2))$ field operations (recall that $M( \cdot )$ is super-additive). For each coefficient of $\tilde{g}_1,\ldots, \tilde{g}_d$, we have to verify the condition in Step 9, which can be performed with a reduction modulo $\mathcal{P}_p$ (to compute $\tilde{c}$) and two polynomial divisions over $\textbf{F}_p$. Therefore, for each $c$, we can perform Steps 8 and 9 with $\mathcal{O}(M(n)M(d_p))$ field operations. Since $\sum \deg \tilde{g}_i = n$, we have a total cost of $\mathcal{O}(nM(n)M(d_p))$ field operations for Steps 7-10.
\end{proof}

\subsubsection{Algorithm \texttt{Partitions}}

The following is a Las Vegas type algorithm that computes the partitions $P_1,\ldots, P_r$ of $L_1,\ldots, L_r$.

\begin{algorithm}[H]\label{alg:partition2}
\caption{\texttt{Partitions}}
\begin{algorithmic} 
\STATE \textbf{Input:} The irreducible factors $F_1,\ldots, F_r$ of $\Phi_f$ and a good $K(t)$-ideal $\mathcal{P}_p$ (see Definition \ref{def:GoodIdeal})
\STATE \textbf{Output:} The partitions $P_1,\ldots, P_r$ of $L_1,\ldots, L_r$.
\vspace{0.1cm}
\STATE \text{1. Let $\tilde{\mathcal{S}_i}=\{\; \}$, $i=1,\ldots, r$.}

\STATE \text{2. $I:=\{1,\ldots, r\}$.}
\vspace{0.1cm}
\STATE \text{3. \textbf{while} $I \neq \emptyset$ \textbf{do}}
\STATE \text{4. \;\;\;\; Let $c\in K(f)$ such that $h_{j,c}(t)\in \mathcal{O}_p$, $j=1,\ldots, r$.}
\STATE \text{5. \;\;\;\; Compute $\tilde{q}_{j,c}(x)\in \textbf{F}_p[x]$ as in Equation \ref{eq_qtilde}. 
}
\STATE \text{6. \;\;\;\; \textbf{for} $i \in I$ \textbf{do}}
\STATE \text{7. \;\;\;\;\;\;\;\;\;\; Compute the system $\tilde{\mathcal{S}}_{i,c}$ (see Equation (\ref{eq:redsystem})).}
\STATE \text{8. \;\;\;\;\;\;\;\;\;\; Let $\tilde{\mathcal{S}_i}:=\tilde{\mathcal{S}_i} \cup \tilde{\mathcal{S}}_{i,c}$.}
\vspace{0.1cm}
\STATE \text{9. \;\;\;\;\;\;\;\;\;\; \textbf{if} $\texttt{Check}(\tilde{S}_i, i)\neq\textit{false}$ \textbf{then} }
\STATE \text{10. \;\;\;\;\;\;\;\;\;\;\;\;\;\;\;\; $\texttt{Remove}(I, i)$.}
\STATE \text{11. \;\;\;\;\;\;\;\;\;\;\;\;\;\;\;\; Let $P_i$ be the output of $\texttt{Check}(\tilde{S}_i, i)$.}

\STATE \text{12. \textbf{return} $P_1,\ldots, P_r$.}
\end{algorithmic}
\end{algorithm}

\begin{remark}\label{rem:smallfield}
In general, the elements in Step 4 can be taken inside $K$. This will work except, possibly, when $K$ has very few elements, which might not be enough to find $P_i$. If this happens we have two choices: 
\begin{itemize}
\item[1)] Choose $c\in K(f)\backslash K$ or
\item[2)] Extend the base field $K$ and compute/solve the system $\tilde{\mathcal{S}}_i$ over this extension.
\end{itemize}
We choose the latter. Recall that the solutions we are looking for are composed of $0$'s and $1$'s and hence can be computed over any extension of $K$. Furthermore, extending the base field $K$ does not create new solutions since the partitions are determined by the factorization of $\Phi_f(x)$ computed over $K(t)$, where $K$ is the original field.
\end{remark}

In what follows we determine the complexity of computing $P_1,\ldots, P_r$. We assume, based on our experiments (see Table \ref{table}), that the algorithm finishes using $\mathcal{O}(1)$ elements $c\in K$ (or in a finite extension of $K$) to generate a system $\tilde{\mathcal{S}_i}$ whose solution gives $P_i$.

 \begin{theorem}\label{theo: complexity}
Assuming that \normalfont Algorithm \texttt{Partitions} \it finishes using $\mathcal{O}(1)$ elements inside $K$ in Step 4, the partitions $P_1,\ldots, P_r$, corresponding to the principal subfields $L_1,\ldots, L_r$ of the field extension $K(t)/K(f(t))$, can be computed with an expected number of $ \mathcal{O}( r(rM(n)M(d_p)+M(n^2)))$ field operations, where $d_p$ is the degree of the polynomial defining $\mathcal{P}_p$.
 \end{theorem}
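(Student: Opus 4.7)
The plan is to bound the cost of a single iteration of the outer \texttt{while} loop and then multiply by the assumed $\mathcal{O}(1)$ number of iterations. Within one iteration the work splits naturally into three blocks: Step 5 (construction of the $\tilde{q}_{j,c}$), Step 7 (assembly of the local systems $\tilde{\mathcal{S}}_{i,c}$ via polynomial divisions), and Step 9 (the call to \texttt{Check}). The amortization relies on two degree facts: Remark \ref{rem:fact}, which gives $\sum_i d_i = n$ for the $x$-degrees of the factors $F_i$ (as well as for the $t$-degrees), and the choice $c \in K$ justified by Remark \ref{rem:smallfield}, which keeps all auxiliary polynomials of degree $\mathcal{O}(n)$.

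For Step 5, with $c \in K$ the rational functions $h_{j,c}(t) = F_j'(c)/F_j(c)$ have numerator and denominator of $t$-degree $\mathcal{O}(n)$, so the lcm $l_c$ and each $p_{j,c}$ are polynomials of degree $\mathcal{O}(n)$, and hence $\tilde{q}_{j,c} \in \textbf{F}_p[x]$ has $x$-degree $\mathcal{O}(n)$. Since arithmetic in $\textbf{F}_p \cong K[x]/\langle p(x)\rangle$ costs $\mathcal{O}(M(d_p))$ field operations, the $r$ polynomials $\tilde{q}_{j,c}$ are constructed in $\mathcal{O}(r\,M(n)\,M(d_p))$ total. For Step 7, each $i\in I$ requires $r$ polynomial divisions of $\tilde{q}_{j,c}$ by $\tilde{F}_i$ over $\textbf{F}_p[x]$; each such division costs $\mathcal{O}(M(n)M(d_p))$, and with $|I|\le r$ we get $\mathcal{O}(r^2 M(n) M(d_p))$ per outer iteration. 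For Step 9 at most $r$ calls to \texttt{Check} are made; by the previous complexity theorem each costs $\mathcal{O}(n_e r^{\omega-1} + M(n^2) + n M(n) M(d_p))$, where under the $\mathcal{O}(1)$-iteration assumption the accumulated system for index $i$ has $n_e = \mathcal{O}(d_i d_p)$. Summing over $i$ and using $\sum_i d_i = n$, the three contributions become $\mathcal{O}(n d_p r^{\omega-1})$, $\mathcal{O}(r M(n^2))$, and $\mathcal{O}(r n M(n) M(d_p))$; super-additivity of $M$ (so that $n M(n) \le M(n^2)$) and $d_p \le n$ (Remark \ref{rem:polyp}) let one absorb the first and third into $\mathcal{O}(r^2 M(n) M(d_p) + r M(n^2))$, yielding the claimed bound.

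The step I expect to be the main obstacle is the fine accounting in this last paragraph: one must verify carefully that the degrees of the intermediate objects really do not exceed $\mathcal{O}(n)$ (tracing $l_c$, $p_{j,c}$, and the coefficients of the $\tilde{P}_i$-products back to Remark \ref{rem:fact}), and that the auxiliary terms $n d_p r^{\omega-1}$ and $r n M(n) M(d_p)$ coming out of \texttt{Check} are absorbed by the two stated terms — which uses only standard properties of $M$ (super-additivity, and $M(ab) \le M(a)M(b)$ up to constants) together with $d_p = \mathcal{O}(\log n)$ in the worst case guaranteed by Remark \ref{rem:polyp}. Once these fit together, multiplying the per-iteration cost by the $\mathcal{O}(1)$ iteration count finishes the proof.
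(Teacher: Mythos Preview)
Your approach matches the paper's: both cost a single outer iteration by separating the setup of the $\tilde q_{j,c}$, the $r^{2}$ divisions over $\textbf{F}_p$ that assemble the systems $\tilde{\mathcal S}_{i,c}$, and the $r$ calls to \texttt{Check}, and then sum using $\sum_i d_i = n$. The paper breaks your Step~5 down more finely---it isolates an $\mathcal{O}(n^2)$ cost for evaluating the $G_j$ at $c$ to form the $h_{j,c}$, an $\mathcal{O}(rM(n)\log n)$ cost for the lcm $l_c$, and so on, and in particular the $\mathcal{O}(n^2)$ piece is not covered by your blanket $\mathcal{O}(rM(n)M(d_p))$ when $r$ and $d_p$ are small (though it is absorbed by $rM(n^2)$ in the end, so this is harmless); in the other direction, the paper's final line records the per-call \texttt{Check} contribution as $M(n)M(d_p)$ rather than the $nM(n)M(d_p)$ stated in the \texttt{Check} theorem, which lets it sidestep exactly the absorption you flag as the main obstacle.
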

 
\begin{proof}

Given $g=\frac{g_n(t)}{g_d(t)}\in \mathcal{O}_p$, we can compute its image in $\textbf{F}_p$ with $\mathcal{O}(M(deg(g))+M(d_p))$ field operations. Hence, we can compute the images of the polynomials $F_1,\ldots, F_r$ in $\textbf{F}_p$ with $\mathcal{O}(n(M(n)+M(d_p)))$ field operations.

Let $c\in K$. We first compute $h_{j,c}:=F'_j(c)/F_j(c)=G'_j(c)/G_j(c)\in \mathcal{O}_p$, $j=1,\ldots, r$ (see Remark \ref{rem:fact}). Evaluating $G_j\in K[x,t]$ at $x=c$ costs $\mathcal{O}(nd^x_j)$, where $d^x_j=\deg_x(G_j)$. If $d^t_j=\deg_t(G_j)$, simplifying the rational function $G_j'(c)/G_j(c)$ to its minimal form costs $\mathcal{O}(M(d^t_j)\log d^t_j)$. Keeping in mind that $\sum d^t_j = \sum d^x_j=n$, one can compute $h_{j,c}, j=1,\ldots, r$, with $ \mathcal{O}(n^2+M(n)\log n) \text{ field operations}. $

Since $c\in K$, $\deg_t(h_{j,c})\leq d_j^t$ and we can compute the image $\tilde{h}_{j,c}$ of $h_{j,c}, j=1,\ldots, r$, in $\textbf{F}_p$ with $\mathcal{O}(M(n)+rM(d_p))$ field operations. Let us write $h_{j,c}=n_{j,c}/d_{j,c}$, where $n_{j,c}, d_{j,c}$ $\in K[t]$ are coprime. We can compute $l_c=\text{lcm}(d_{1,c}, \ldots, d_{r,c})$ with $r$ $\text{lcm}$ computations, with a total cost of $ \mathcal{O}(rM(n)\log n)$ field operations.
Next, we define $\tilde{q}_{j,c}=p_{j,c}(x)-\tilde{h}_{j,c}(t)l_c(x)$, $j=1,\ldots, r$, where $p_{j,c}(x):=l_c(x)\frac{n_{j,c}(x)}{d_{j,c}(x)}\in K[x]$. The cost of this step is negligible.

For each $i=1,\ldots, r$, to compute the partition $P_i$ we have to compute the system $\tilde{S}_{i,c}$, which involves the division of $\tilde{q}_{j,c}$ by $\tilde{F}_i$, for $j=1,\ldots, r$. Since $\deg(\tilde{q}_{j,c}(x))\leq n$, each of these divisions cost $\mathcal{O}(M(n)M(d_p))$ field operations and hence, we can compute the system $\tilde{S}_{i,c}$ with $ \mathcal{O}(rM(n)M(d_p)) \text{ field operations}. $
This system has at most $d_i d_p$ equations and hence, one call of algorithm \texttt{Check} costs $\mathcal{O}(d_i d_p r^{\omega-1}+M(n^2)+M(n)M(d_p))$. The result follows by adding the complexities and simplifying.
\end{proof} 

\begin{remark}
If Algorithm \normalfont \texttt{Partitions} \it needs $s$ elements $c\in K$ to compute all partitions $P_1,\ldots, P_r$, then the total cost is bounded by $s$ times the cost given in Theorem \ref{theo: complexity}.
\end{remark} 

\begin{corollary}\label{cor:complex}
Let $f\in K(t)$ of degree $n$ and let $F_1,\ldots, F_r$ be the irreducible factors of $\Phi_f(x)\in K(t)[x]$. Let $m$ be the number of subfields of $K(t)/K(f(t))$. One can compute, using fast arithmetic, the subfield lattice of $K(t)/K(f(t))$ with $\tilde{\mathcal{O}}(rn^2)$ field operations plus $\tilde{\mathcal{O}}(mr^2)$ CPU operations.
\end{corollary}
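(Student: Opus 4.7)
The plan is to split the cost into two parts: field operations needed to compute the $r$ partitions $P_1,\ldots,P_r$ of the principal subfields, and combinatorial (CPU) operations needed to close these partitions under joins to enumerate the entire subfield lattice.

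For the field-operation part, I would invoke Algorithm \texttt{Partitions} directly: by Theorem \ref{theo: complexity}, the partitions $P_1,\ldots,P_r$ can be produced with an expected $\mathcal{O}(r(rM(n)M(d_p)+M(n^2)))$ field operations. Using fast arithmetic gives $M(n)=\tilde{\mathcal{O}}(n)$ and $M(n^2)=\tilde{\mathcal{O}}(n^2)$, while the remark following Definition \ref{def:GoodIdeal} shows that a good $K(t)$-ideal can be chosen with $d_p\in \mathcal{O}(\log n)$, so $M(d_p)=\tilde{\mathcal{O}}(1)$. Plugging these in, the bound becomes $\tilde{\mathcal{O}}(r^2 n+rn^2)$, and since $r\le n$ this simplifies to $\tilde{\mathcal{O}}(rn^2)$ field operations, exactly as claimed. (A preliminary step that finds such a $\mathcal{P}_p$ and reduces the factors $F_i$ mod it is subsumed by the same bound.)

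For the CPU-operation part, I would rely on Theorems \ref{theo:main3} and \ref{theo:main2}: every subfield $L$ of $K(t)/K(f)$ is the intersection of a subset of the principal subfields, and under the bijection $L\mapsto P_L$ this intersection corresponds to the join $P_L = \bigvee_{i\in I_L} P_i$ of the principal partitions. Thus enumerating the subfield lattice reduces to enumerating all joins of subsets of $\{P_1,\ldots,P_r\}$. I would do this by a standard closure procedure: maintain a set $\mathcal{L}$ of partitions already found, initialized to $\{P_1,\ldots,P_r\}$, and repeatedly pick a $P\in\mathcal{L}$ together with some $P_i$, compute $P\vee P_i$, and add it to $\mathcal{L}$ if new, halting when no new partition appears. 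Using the union-find based join algorithms from \cite{Part, Szutkoski2016,Freese2008}, each single join of two partitions of $\{1,\ldots,r\}$ costs $\tilde{\mathcal{O}}(r)$ CPU operations. Since $|\mathcal{L}|=m$ at termination and each element of $\mathcal{L}$ needs to be joined against at most $r$ principal partitions, the total number of joins is $\mathcal{O}(mr)$, giving $\tilde{\mathcal{O}}(mr^2)$ CPU operations; bookkeeping (hashing partitions to test membership in $\mathcal{L}$) adds only a polylog factor absorbed in the $\tilde{\mathcal{O}}$.

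Adding the two bounds yields the corollary. The main delicate point is the CPU count: I need to argue that the closure procedure really does find every subfield after $\mathcal{O}(mr)$ joins and not more. This is immediate from Theorem \ref{theo:main3}, since any subfield equals a join of at most $r$ principal partitions, so a breadth-first closure against the generating set $\{P_1,\ldots,P_r\}$ reaches every element of the lattice in at most $r$ rounds, and within each round at most $m$ partitions are expanded. The other subtlety, hidden inside the field-operation bound, is that Theorem \ref{theo: complexity} is stated under the assumption that $\mathcal{O}(1)$ elements $c\in K$ suffice in Step 4 of Algorithm \texttt{Partitions}; should the base field be too small, Remark \ref{rem:smallfield} lets us work over a bounded extension without changing the asymptotic cost.
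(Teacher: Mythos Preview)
Your proposal is correct and follows essentially the same approach as the paper: invoke Theorem~\ref{theo: complexity} together with Remark~\ref{rem:polyp} for the $\tilde{\mathcal{O}}(rn^2)$ field-operation bound, and then use the partition-join machinery of \cite{Szutkoski2016} for the $\tilde{\mathcal{O}}(mr^2)$ CPU bound. The only difference is cosmetic: the paper simply cites \cite{Szutkoski2016} for the lattice-enumeration cost, whereas you spell out the underlying closure-under-joins argument explicitly.
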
 
\begin{proof}
Using fast arithmetic, we can compute the partitions of the principal subfields with $\tilde{\mathcal{O}}(rn^2 d_p)$ field operations, by Theorem \ref{theo: complexity}. By Remark \ref{rem:polyp}, $d_p\in \mathcal{O}(\log n)$. The complete subfield lattice can be computed with $\tilde{\mathcal{O}}(mr^2)$ CPU operations (see \cite{Szutkoski2016}).
\end{proof}

\section{General algorithm and Timings}\label{sec:5}

In this section we outline an algorithm for computing all complete decompositions of $f$ and give an example. Some timings, comparing our algorithm with \cite{Ayad2008}, are also given. 

\subsection{General Algorithm}

Let $f\in K(t)$ and let $F_1,\ldots, F_r$ be the monic irreducible factors of $\Phi_f$. By Theorem \ref{theo:main}, each complete decomposition corresponds to a maximal chain of subfields of $K(t)/K(f(t))$ and vice-versa. Using the algorithms in Section \ref{sec:4} and fast subfield intersection techniques from \cite{Szutkoski2016}, we can (quickly) compute the subfield lattice of $K(t)/K(f(t))$, where each subfield is represented by a partition.
To actually compute the decompositions of $f$, we need to find a \textit{L\"uroth generator} for each subfield. That is, given a partition $P_L$ of $\{1,\ldots, r\}$ representing a subfield $L$, we want to find a rational function $h\in K(t)$ such that $L=K(h)$. 

\begin{theorem}\label{theo:generator}
Let $f\in K(t)$ and let $F_1,\ldots, F_r$ be the monic irreducible factors of $\Phi_f\in K(t)[x]$. Let $L$ be a subfield of $K(t)/K(f)$ and $P=\{P^{(1)}, \ldots, P^{(s)}\}$ be the partition of $L$. Let $g:=\prod_{i\in P^{(1)}} F_i\in L[x]$. If $c\in K(t)$ is any coefficient of $g$ not in $K$, then $L=K(c)$.
\end{theorem}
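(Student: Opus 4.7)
The plan is to identify $g$ explicitly with (a scalar multiple of) $\Phi_h$ for some Lüroth generator $h$ of $L$, and then observe that the coefficients of this polynomial are Möbius transformations of $h$, hence generate $L$ whenever they are non-constant.

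First, I would invoke Theorem \ref{theo:main2}: since $1 \in P^{(1)}$, the first $P$-product $g = \prod_{i \in P^{(1)}} F_i$ is the minimal polynomial of $t$ over $L$. In particular $[K(t):L] = \deg g$. By Lüroth's Theorem, there exists $h \in K(t)$ with $L = K(h)$; write $h = h_n(t)/h_d(t)$ with $h_n, h_d \in K[t]$ coprime, and let $a_i$, $b_i$ denote the coefficients of $h_n$ and $h_d$, respectively. Then $t$ is a root of
\[
\Phi_h(x) = h_n(x) - h\cdot h_d(x) = \sum_i (a_i - h\,b_i)\,x^i \in L[x],
\]
and this polynomial has degree $\deg h = [K(t):K(h)] = \deg g$, so it is (a nonzero $L$-multiple of) the minimal polynomial of $t$ over $L$. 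Writing $\alpha \in L^\times$ for the leading coefficient of $\Phi_h$, we get $g = \Phi_h/\alpha$.

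Next I would inspect $\alpha$: according as $\deg h_n >, <,$ or $=\deg h_d$, the leading coefficient is $a_d$, $-b_d h$, or $a_d - b_d h$. In every case $\alpha = \gamma - \delta h$ for some $\gamma, \delta \in K$ not both zero. Therefore the coefficient of $x^i$ in $g$ is
\[
c_i \;=\; \frac{a_i - b_i h}{\gamma - \delta h},
\]
which is a Möbius transformation $\phi_i \in K(y)$ evaluated at $h$. Such a $\phi_i$ is either a constant (when the $2{\times}2$ matrix $\bigl(\begin{smallmatrix}-b_i & a_i\\ -\delta & \gamma\end{smallmatrix}\bigr)$ is singular, i.e. $a_i\delta = b_i\gamma$) or a unit in $K(y)$ under composition. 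In the latter case $\phi_i^{-1}$ exists and $h = \phi_i^{-1}(c_i) \in K(c_i)$.

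Finally, I would conclude: if $c \in K(t)$ is a coefficient of $g$ with $c \notin K$, then $c = \phi_i(h)$ for some non-degenerate Möbius transformation $\phi_i$. Hence $K(c) \subseteq K(h) = L$ (because $c \in L$ already from $g \in L[x]$), and conversely $h \in K(c)$ forces $L = K(h) \subseteq K(c)$. Therefore $L = K(c)$, as claimed. The only mild obstacle is the case split for $\alpha$ according to the relative degrees of $h_n$ and $h_d$, but this is routine and uniformly handled by the affine form $\gamma - \delta h$.
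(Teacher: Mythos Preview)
Your proof is correct and follows essentially the same route as the paper's: identify $g$ with the minimal polynomial $\Phi_h$ of $t$ over $L$ for a L\"uroth generator $h$, then observe that each non-constant coefficient is a fractional-linear transform of $h$. The only cosmetic difference is that the paper first replaces $h$ by a suitable unit so that $\Phi_h$ is already monic (so $\alpha=1$ and the coefficients are simply $h_{n,i}-h\,h_{d,i}$, avoiding your case split on the degrees of $h_n,h_d$), whereas you keep $h$ arbitrary and absorb the normalization into the M\"obius map $\phi_i$.
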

\begin{proof}
By Lur\"oth's Theorem, there exists a rational function $h(t)\in K(t)$ such that $L=K(h(t))$. Let $\Phi_h\in L[x]$. We may suppose that $\Phi_h\in L[x]$ is the minimal polynomial of $t$ over $L$. Let $g=\prod_{i\in P^{(1)}} F_i\in L[x]$. Since $1\in P^{(1)}$ (recall that $F_1=x-t$), it follows that $g(t)=0$ and hence, $\Phi_h \mid g$. However, $\Phi_h$ and $g$ are monic irreducible polynomials (over $L$) and hence, $g=\Phi_h$. Therefore, $g = h_n(x)-h(t)h_d(x)$. Let $c_i$ be the coefficient of $x^i$ of $g$, then \[c_i=h_{ni}-h(t) h_{d,i}=(-h_{d,i}t+h_{n,i}) \circ h(t), \] where $h_{n,i}$ and $h_{d,i}$ are the coefficients of $x^i$ in $h_n(x)$ and $h_d(x)$, respectively. If $h_{d,i}\neq 0$, then $-h_{d,i}t+h_{n,i}$ is a unit and hence, $L=K(h(t))=K(c_i).$
\end{proof}

Finally, given $f,h\in K(t)$, we want to find $g\in K(t)$ such that $f=g\circ h$. It is known that $g$ is unique (see \cite{ALONSO1995}) and several methods exist for finding $g$. The most straightforward method is to solve a linear system in the coefficients of $g$ (see \cite{Dickerson} for details). Another approach can be found in \cite{Giesbrecht1988} and uses $\mathcal{O}(nM(n)\log n)$ field operations.

\begin{remark}\label{rem:poly}
Our algorithm also works when $f\in K[t]$ is a polynomial if we normalize the generator of each subfield. This follows from Corollary 2.3 of \cite{Ayad2008}. 
If $f=g\circ h$ is a minimal decomposition, then $K(h)$ is a principal subfield and its partition is not refined by any other except $P_1$. Thus, given $P_1,\ldots, P_r$, it is very easy to verify which of these partitions represents a minimal decomposition. For a principal subfield, a L\"uroth generator can be obtained as a byproduct of Algorithm \normalfont \texttt{Check}. \it Hence, given $P_1,\ldots, P_r$, to compute all minimal decompositions of $f$ we only need to compute at most $r-1$ left components. When $\text{char}(K)> 0$, the factorization of $f(x)-f(t)$ can be computed with $\tilde{\mathcal{O}}(n^{\omega+1})$ field operations, where $2< \omega \leq 3$ is a matrix multiplication exponent (see \cite{Bostan2004} and \cite{Lecerf2007}). An algorithm in \cite{Blankertz2014} also computes all minimal decompositions, and take $\tilde{\mathcal{O}}(n^6)$ field operations (for finite fields). For more details, see \cite[Theorem 3.23]{Blankertz2011}.

\end{remark}

\subsection{An Example}

Let $f:=(t^{24}-2t^{12}+1)/(t^{16}+2t^{12}+t^8)$ and consider the extension $\mathbb{Q}(t)/\mathbb{Q}(f)$. The irreducible factors of $\Phi_{f}(x)$ are $F_1=x-t, F_2=x+t, F_3=x+1/t, F_4=x-1/t, F_5=x^2+t^2, F_6=x^2+1/t^2, F_7=x^8+(\alpha/t^4\beta) x^4+1/t^4$ and $F_8=x^8+(\alpha/\beta) x^4+t^4$, where $\alpha=t^8+1$ and $\beta=t^4+1$.

Using Algorithm \texttt{Partitions} we get the following partitions of the principal subfields $L_1,\ldots, L_8$: 
\begin{center}
$\begin{array}{l}
P_1= \{ \{ 1\},\{ 2\},\{ 3\},\{ 4\},\{ 5\},\{ 6\},\{ 7\},\{ 8\}\} \\
P_2= \{ \{ 1,2\},\{ 3,4\},\{ 5\},\{ 6\},\{ 7\},\{ 8\}\}\\
P_3= \{ \{ 1,3\},\{ 2,4\},\{ 5,6\},\{ 7,8\}\}\\
P_4= \{\{ 1,4\},\{ 2,3\},\{ 5,6\},\{ 7, 8\}\}\\
P_5= \{ \{ 1,2,5\},\{ 3,4,6\},\{ 7\},\{ 8\}\}\\
P_6= \{ \{ 1,2,6\},\{ 3,4,5\},\{ 7, 8\}\}\\
P_7= \{ \{ 1,2,5,7\},\{ 3,4,6,8\}\}\\
P_8= \{ \{ 1,2,3,4,5,6,7,8\}\}.\\
\end{array}$
\end{center}

By joining the partitions of all subsets of $\{P_1,\ldots, P_8\}$, we get the following new partitions: 

\begin{center}
$\begin{array}{l}
P_9=P_2 \vee P_4 = \{\{1,2,3,4\}, \{5,6\},\{7,8\}\} \\
P_{10}= P_3 \vee P_6 = \{\{1,2,3,4,5,6\}, \{7,8\}\}.\\
\end{array}$
\end{center}

Hence, $P_1,\ldots, P_{10}$ are the partitions of every subfield of $\mathbb{Q}(t)/\mathbb{Q}(f(t))$. Next we compute all maximal chains of subfields. Recall that the subfield relation translates as refinement of partitions, for instance, $L_5\subseteq L_2$, since $P_2$ refines $P_5$. Therefore, by looking at the partitions $P_1,\ldots, P_{10}$, we see that one maximal chain of subfields is
\[ \mathbb{Q}(f)=L_8\subseteq L_7\subseteq L_5\subseteq L_2 \subseteq L_1=\mathbb{Q}(t).   \]

Now, let us find generators for these fields. As an example, let us find a generator for $L_7$. Following Theorem \ref{theo:generator}, let \[g= \textstyle \prod_{i\in P_7^{(1)}} F_i =F_1F_2F_5F_7 = x^{12}-c x^8-c x^4-1,\] where $c=(t^{12}-1)/(t^8+t^4) $. Since $c\in K(t)\backslash K$, it follows that $L_7=\mathbb{Q}\left( c \right)$. This yields the maximal chain of subfields:
\[ \mathbb{Q}(f)\subseteq \mathbb{Q}\left(c\right) \subseteq \mathbb{Q}(t^4)\subseteq \mathbb{Q}(t^2)\subseteq \mathbb{Q}(t). \]
Finally, we compute the corresponding complete decomposition of $f$ by computing left components. For instance, $\mathbb{Q}(f)\subseteq \textstyle\mathbb{Q}\left( \frac{t^{12}-1}{t^8+t^4}\right)$ implies that there exists $g\in K(t)$ such that $f=g\circ \frac{t^{12}-1}{t^8+t^4}$. In this case we have $g=t^2$ and hence
\[ f=t^2 \circ \frac{t^{12}-1}{t^8+ t^4}. \]
Now $\mathbb{Q}(\frac{t^{12}-1}{t^8+t^4})\subseteq \mathbb{Q}(t^4)$ and we can write $\frac{t^{12}-1}{t^8+t^4}=\frac{t^3-1}{t^2+t}\circ t^4$, and so on. This yields the following complete decomposition: \[ f= t^2 \circ \frac{t^3-1}{t^2+t}\circ t^2 \circ t^2.\]
Doing this for every maximal chain of subfields yields all non-equivalent complete decompositions of $f$.

\subsection{Timings}
Finally, we compare our algorithm \texttt{Decompose}, which returns all non-equivalent complete decompositions of $f$, with the algorithms \texttt{full\_decomp} and \texttt{all\_decomps} from \cite{Ayad2008}, which returns a single complete decomposition and all complete decompositions, respectively. All timings presented below also include the factorization time for $\Phi_f\in K(t)[x]$.

In the table below, $n$ is the degree of $f\in K(t)$ and $r$ is the number of irreducible factors of $\Phi_f$. We also list $d_p$, the degree of the polynomial defining the good $K(t)$-ideal and $\#c$, the number of elements in $K$ (or an extension of $K$, see Remark \ref{rem:smallfield}) used to determine the partitions $P_1,\ldots, P_r$. 

Our algorithm better compares to \texttt{all\_decomps}, since both algorithms return all non-equivalent complete decompositions of $f$. According to our experiments, for \textit{small} values of $r$, the time spent by algorithm \texttt{Decompose} to compute all non-equivalent complete decompositions is similar to the time spent by \texttt{full\_decomp} to compute a single decomposition. However, as $r$ increases, we see a noticeable improvement compared to \texttt{full\_decomp} and more so to \texttt{all\_decomps}. More examples and details about these timings can be found at \url{www.math.fsu.edu/~jszutkos/timings} and the implementation at \url{www.math.fsu.edu/~jszutkos/Decompose}.

\begin{table}[H]
\caption{Timings (in seconds)}
\label{table}
\begin{minipage}{\columnwidth}
\begin{center}
\begin{tabular}{|c|c|c|c|c|c|}
\hline
\multirow{2}{*}{\small $n$} & \multirow{2}{*}{$r$} & \small \multirow{2}{*}{$d_p,\#c$} & \multirow{2}{*}{\small\texttt{Decompose}} & \multicolumn{2}{c|}{\small Ayad \& Fleischmann (2008) \cite{Ayad2008}}        \\ \cline{5-6} 
                     &                                           &                           &                                                                      & \small \texttt{full\_decomp} & \small \texttt{all\_decomps} \\ \hline
12      & 7      &  3,1     & 0.01    & 0.02   & 0.03\\ \hline   
24           & 8      &  1,4     & 0.02    & 0.00   & 0.09 \\ \hline                
144           & 10     &  1,4     & 1.82   & 1.88      & 101.08    \\ \hline
24      & 10     &  3,1     & 0.02    & 0.01   & 0.20 \\ \hline
18         & 12     &  4,1     & 0.05    & 0.06   & 0.81\\ \hline
24      & 14     &  4,1     & 0.07    & 0.51   & 10.57 \\ \hline
60         &  17      &  5,1     &  0.18    & 91.68   & 981.43   \\ \hline
60           & 17      &  1,8     & 0.77    &  485.19  & 4,338.47   \\ \hline
96      & 26     &  2,4  &  0.42    & 211.30 & $>12h$   \\ \hline
60      &  60     &  3,5     & 1.91    &  $> 12h$  & n.a.   \\ \hline
120         & 61       &  3,5      & 2.36 & n.a.     & n.a.  \\ \hline
169         &  91    &  3,7     & 3.41    & n.a.   & n.a.   \\ \hline
120         & 120       &  5,4      & 18.59 & n.a.      &  n.a.   \\ \hline
168        &  168      &  4,9  &  50.53    & n.a.       & n.a.   \\ \hline            
\end{tabular}
\end{center}
\medskip
\footnotesize\emph{n.a.:} not attempted.
\end{minipage}
\end{table}

\bibliographystyle{ACM-Reference-Format}
\bibliography{MyBibired.bib} 

\end{document}